\documentclass[12pt]{article}

\usepackage[T1]{fontenc}
\usepackage[utf8]{inputenc}
\usepackage{amsmath,amsfonts,amsthm,amssymb}
\usepackage{booktabs,tabularx}
\usepackage{float}
\usepackage{xcolor}
\usepackage{geometry}
\usepackage{enumerate}
\usepackage{graphicx,color}
\usepackage{hyperref}
\usepackage{mathtools}
\usepackage{setspace}
\usepackage{cite}

\makeatletter
\renewcommand*\env@matrix[1][\arraystretch]{%
  \edef\arraystretch{#1}%
  \hskip -\arraycolsep
  \let\@ifnextchar\new@ifnextchar
  \array{*\c@MaxMatrixCols c}}
\makeatother

\newtheorem{theorem}{Theorem}[section]
\newtheorem{proposition}[theorem]{Proposition}
\newtheorem{corollary}[theorem]{Corollary}
\newtheorem{lemma}[theorem]{Lemma}

\newtheorem{definition}{Definition}[section]
\newtheorem{remark}[definition]{Remark}


\newcommand{\END}{\hfill\mbox{\raggedright $\Diamond$}\medskip} 
\newcommand{\etal}{\textrm{et al.}}
\newcommand{\ignore}[1]{}

\title{Homeostatic Mechanisms in Biological Systems}

\author{Pedro P. A. Cardoso de Andrade \\
Universidade de S\~ao Paulo \\
S\~ao Paulo, 05508-090, Brazil \\
\href{mailto:pedrop.and@gmail.com}%
{pedrop.and@gmail.com} 
\and 
Jo\~ao L. O. Madeira \\
University of Bath \\
Bath, BA2 7AY, UK \\
\href{mailto:joaoluizoliveiramadeira@gmail.com}%
{jldom20@bath.ac.uk}
\and 
Fernando Antoneli \\
Universidade Federal de S\~ao Paulo \\
S\~ao Paulo, 04039-032, Brazil \\
\href{mailto:fernando.antoneli@unifesp.br}%
{fernando.antoneli@unifesp.br} 
}

\date{\today}

\begin{document}

\maketitle

\begin{abstract}
In this paper we investigate the homeostatic mechanism in two biologically motivated models: intracellular copper regulation and self immune recognition. The analysis is based on the notions of infinitesimal homeostasis and near-perfect homeostasis. We introduce a theoretical background that makes it possible to consider points of infinitesimal homeostasis that lie at the boundary of the domain of definition of the input-output function. We show that the two models display near-perfect homeostasis. Moreover, we show that, unlike the examples of [M. Reed, J. Best, M. Golubitsky, I. Stewart, and H. F. Nijhout. Analysis of homeostatic mechanisms in biochemical networks. Bull. Math. Biol., 79(11):2534–2557, 2017], the obstruction of occurrence of infinitesimal homeostasis in both of them is caused by the modeling assumptions that force the point of infinitesimal homeostasis to lie at the boundary of domain of definition of the respective input-output functions.

\medskip

\noindent
{\bf Keywords:} Homeostasis, Input-Output Networks, Perfect Adaptation

\end{abstract}


\newpage

\section{Introduction}

A system exhibits \emph{homeostasis} if on change of an input variable $\mathcal{I}$ some observable $x_o(\mathcal{I})$ remains approximately constant.
Many researchers have emphasised that homeostasis is an important phenomenon in biology.  
For example, the extensive work of Nijhout, Reed, Best and collaborators~\cite{nrbu04,best09,nr14,nbr15,nbr18} consider biochemical networks associated with metabolic signalling pathways.
Further examples include regulation of cell number and size \cite{lloyd13}, control of sleep~\cite{wyatt99}, and expression level regulation in housekeeping genes~\cite{antoneli18}.  

Consider a dynamical system with input parameter $\mathcal{I}$ which varies over an open interval $]\mathcal{I}_{1}, \mathcal{I}_{2}[$. Suppose there is an output variable $x_o$ such that for each $\mathcal{I} \in ]\mathcal{I}_{1}, \mathcal{I}_{2}[$, the value $x_{o}(\mathcal{I})$ well-defined. In this situation, it is reasonable to say that the system would exhibit \emph{homeostasis} if after changing the input variable $\mathcal{I}$, the value of the observable $x_o(\mathcal{I})$ remains approximately constant. There are two formulations often considered by researchers: (1) the strict condition of \emph{perfect homeostasis}, where the observable $x_o(\mathcal{I})$ is required to be constant over a range of external stimuli; (2) the more general condition of \emph{near-perfect homeostasis}, where the observable $x_o(\mathcal{I})$ is required to be within a narrow interval of values over a range of external stimuli.

Golubitsky and Stewart \cite{gs17} proposed to employ methods from singularity theory to define the notion of \emph{infinitesimal homeostasis}. According to this approach, a system exhibits \emph{infinitesimal homeostasis} if $\frac{dx_{o}}{d \mathcal{I}}(\mathcal{I}_{0}) = 0$ for some input value $\mathcal{I}_{0}$, where $x_{o}$ is the function that associates to each input parameter $\mathcal{I}_{0}$ a unique value of the observable $x_o$, called \emph{input-output function}.

Reed \etal \cite{reed17} analyzed four distinct homeostatic mechanisms: feedforward excitation, feedback product inhibition, the kinetic motif, and the parallel inhibition motif. All of them occur in folate and methionine metabolism. Interestingly, \cite{reed17} showed that two of the motifs exhibit infinitesimal homeostasis and that although the other two do not, they all exhibit near-perfect homeostasis (see also \cite{yangyang20}).

\emph{Feedback product inhibition} is probably one of the simplest and best known homeostatic mechanisms in biochemistry. In its simplest form, product inhibition means that the product of a biochemical chain inhibits one or more of the enzymes involved in its own synthesis.
The differential equations are given by
\begin{equation} \label{feedback_inhibition}
\begin{aligned}
& \dot{x}_\iota = \mathcal{I} - g_1(x_{\iota}) - f(x_{\iota},x_o)  \\
& \dot{x}_\sigma = f(x_{\iota},x_o) - g_0(x_{\sigma}) - g_2(x_{\sigma})\\
& \dot{x}_o = g_0(x_{\sigma}) - g_3(x_o)
\end{aligned}
\end{equation}
where $f$, $g_i$ ($i=0,1,2,3$) are smooth functions.

Typically, the functions $g_i$ are defined on positive semi-axis, are linear and increasing.
The function $f$ is defined on the positive orthant and is positive.
The actual kinetic formulas for inhibitory function $f(x_{\iota},x_{o})$ have been extensively studied and depend on the details of the chemical binding of the substrate to one or more sites on the enzyme. One can impose general constraints on the function $f$ in order to get similar behavior: $\frac{\partial f}{\partial x_{\iota}} > 0$ (more substrate, faster reaction) and $\frac{\partial f}{\partial x_{o}} < 0$ (higher substrate, more inhibition of the reaction). 

Under these general conditions, it can be shown (see~\cite{yangyang20}) that the input-output function $x_o$ of \eqref{feedback_inhibition} is well-defined for all $\mathcal{I}>0$ and 
\[
 \frac{d x_o}{d \mathcal{I}} = 0 \qquad \Longleftrightarrow \qquad  
 \frac{\partial f}{\partial x_{\iota}} = 0
\]
That is, the assumption that $\frac{\partial f}{\partial x_{\iota}} > 0$ precludes occurrence of infinitesimal homeostasis. Moreover, it is shown in \cite{reed17} that near-perfect homeostasis is possible in such systems if one chooses an $f$ for which $\frac{\partial f}{\partial x_{\iota}} > 0$ is close to zero -- such a choice is consistent with the biochemistry of feedback product inhibition.

\ignore{
The jacobian matrix of \eqref{feedback_inhibition} at an equilibrium is given by
\begin{equation}
  J = \begin{pmatrix}
  -g_{1, x_{\iota}} - f_{x_{\iota}} & 0 & -f_{x_{o}} \\
  f_{x_{\iota}} & -g_{0,x_{\sigma}}-g_{2,x_{\sigma}} & f_{x_{o}} \\
  0 & g_{0,x_{\sigma}} & -g_{3, x_{o}}
  \end{pmatrix}
\end{equation}
with determinant given by
\[
 \det J = -(g_{1, x_{\iota}} + f_{x_{\iota}}) \, (g_{0,x_{\sigma}}+g_{2,x_{\sigma}}) \, g_{3, x_{o}}
 + f_{x_{o}} \, g_{0,x_{\sigma}} \, g_{1,x_{\sigma}}
\]
By assumption, $\det J<0$ at the equilibrium (because the equilibrium is attractor).

The homeostasis matrix is given by
\begin{equation}
  H = \begin{pmatrix}
  f_{x_{\iota}} & -g_{0,x_{\sigma}}-g_{2,x_{\sigma}} \\
  0 & g_{0,x_{\sigma}}
  \end{pmatrix}
\end{equation}
with homeostasis determinant $\det H = f_{x_{\iota}} \, g_{0,x_\sigma}$.
Since $g_{0,x_{\sigma}}$ is always positive,
\[
 x_o' = 0 \qquad \Longleftrightarrow \qquad f_{x_{\iota}} = 0
\]

Hence, the derivative of the input-output function $x_o$ is
\[
 x_o'=\frac{f_{x_{\iota}} \, g_{0,x_\sigma}}{(g_{1, x_{\iota}} + f_{x_{\iota}}) \, (g_{0,x_{\sigma}}+g_{2,x_{\sigma}}) \, g_{3, x_{o}}
 - f_{x_{o}} \, g_{0,x_{\sigma}} \, g_{1,x_{\sigma}}}
\]
}

The second example of \cite{reed17} exhibiting near-perfect homeostasis but not infinitesimal homeostasis is the \emph{parallel inhibition motif}. Again, the conclusion that infinitesimal homeostasis cannot occur in this system, follows from an incompatibility of a biochemical condition, called parallel inhibition hypotheses, and the condition that $\frac{d x_o}{d \mathcal{I}} = 0$.
Therefore, in both examples the obstruction to the occurrence of infinitesimal homeostasis comes from additional modeling assumptions due to the nature of the phenomena being modeled.

In this paper we consider another type of mechanism that may obstruct the occurrence of infinitesimal homeostasis. Namely, when the point of infinitesimal homeostasis is forced to be at the boundary of the domain of definition of the input-output function $x_o(\mathcal{I})$.

We introduce two biologically motivated models: intracellular copper regulation and self immune recognition. These two models can be represented by four node networks shown in Figure \ref{four_nodes_example_feedback}.
Interestingly, \cite{huang22} obtain the classification of ``homeostasis types'' in four-node core networks and the examples we consider here correspond to core equivalence classes 20 and 18 of \cite{huang22}, respectively.

In order to study the homeostatic mechanisms in those examples we first extend some of the theoretical results of \cite{wang20} to the case where the infinitesimal homeostasis point lies at the boundary. We introduce the notion of asymptotic infinitesimal homeostasis and show that the notion of core networks extend to this new situation.

\begin{figure}[!ht]
\centering
\begin{tabular}{c@{}c}
\includegraphics[scale = 0.35,trim=9cm 5cm 7cm 5cm,clip=true]%
{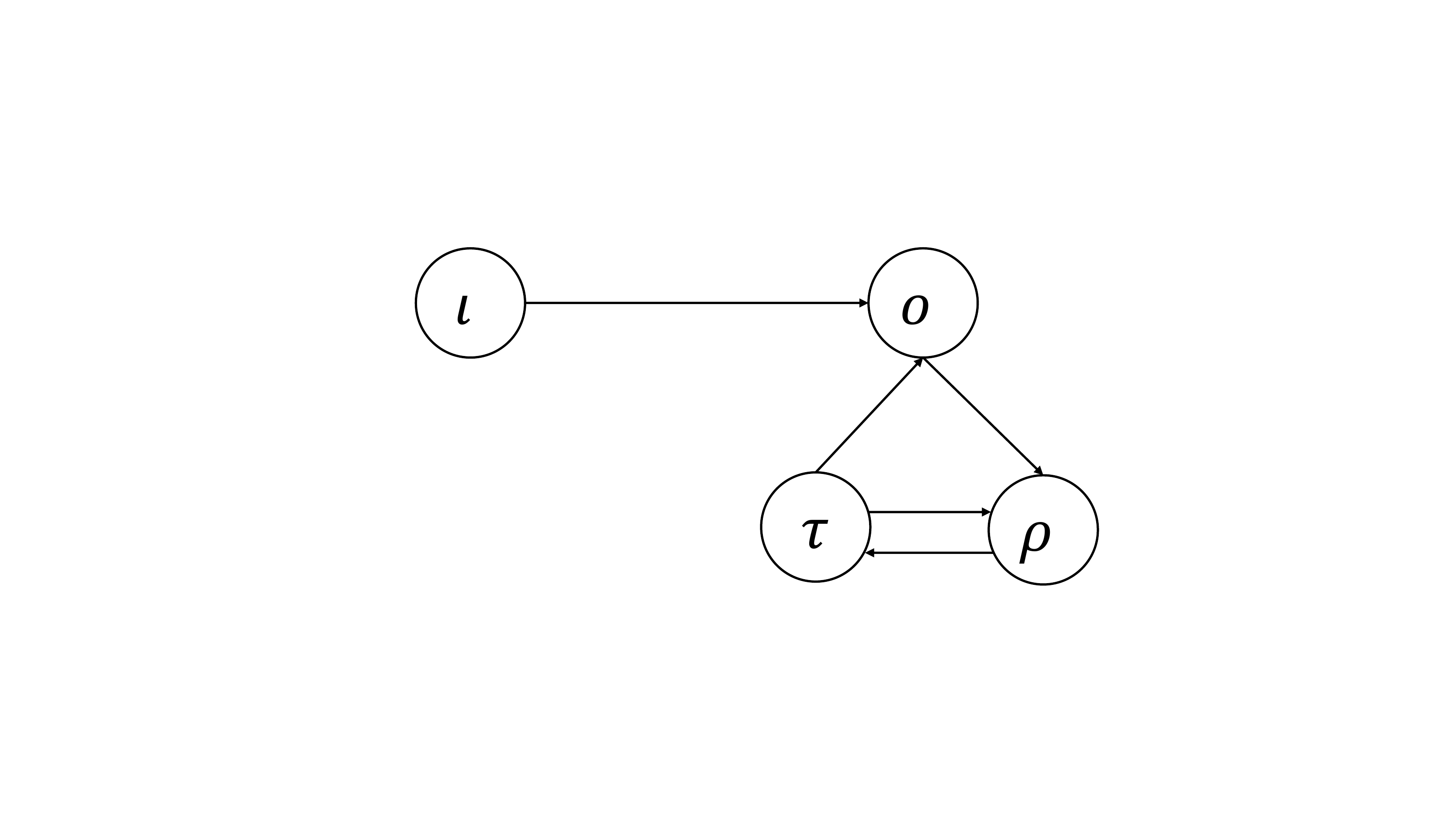} &
\includegraphics[scale = 0.33,trim=6cm 4cm 6cm 3cm,clip=true]%
{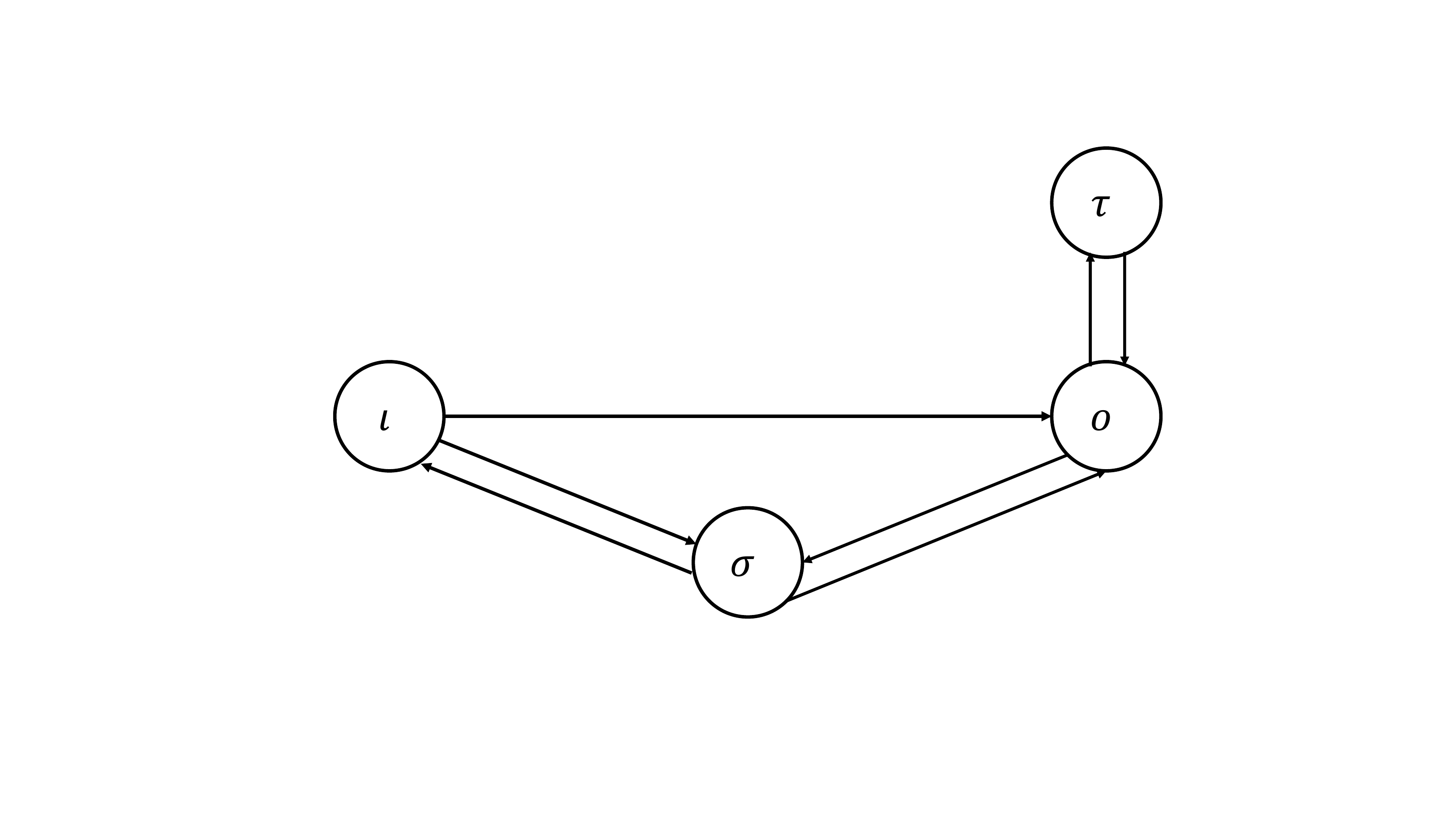} \\
(A) & (B)
\end{tabular}
\caption{\label{four_nodes_example_feedback}
Two abstract four-node input-output networks considered in this paper.
(A) Intracellular copper regulation (core equivalent to network number 20 in \cite{huang22}).
(B) Self immune recognition (core equivalent to network number 18 in \cite{huang22}).
}
\end{figure}

\subsection{Dynamical Formalism for Homeostasis}

Golubitsky and Stewart proposed a mathematical method for the study of homeostasis based on dynamical systems theory~\cite{gs17,gs18} (see the review \cite{gsahy20}). 
In this framework, one consider a system of differential equations 
\begin{equation} \label{general_dynamics}
  \dot{X} = F(X, \mathcal{I})
\end{equation}
where $X = (x_{1}, \cdots, x_{k}) \in \mathbb{R}^{k}$ and parameter $\mathcal{I}\in\mathbb{R}$ represents the external input to the system.

Suppose that $(X^*, \mathcal{I}^*)$ is a linearly stable equilibrium of \eqref{general_dynamics}.  
By the implicit function theorem, there is a function $\tilde{X}(\mathcal{I})$ defined in a neighborhood of $\mathcal{I}^*$ such that $\tilde{X}(\mathcal{I}^*) = X^*$ and $F(\tilde{X}(\mathcal{I}), \mathcal{I}) \equiv 0$.  
The simplest case is when there is a variable, let's say $x_{k}$, whose output is of interest when $\mathcal{I}$ varies.  
Define the associated \emph{input-output function} as $z(\mathcal{I})=\tilde{x}_k(\mathcal{I})$. 
The input-output function allows one to formulate several definitions that capture the notion of homeostasis (see~\cite{ma09,ang13,tang16,gs17,gs18}).

Let $z(\mathcal{I})$ be the input-output function associated to a system of differential equations \eqref{general_dynamics} and the family of equilibria $\tilde{X}(\mathcal{I})$.
We say that the corresponding system \eqref{general_dynamics} exhibits
\begin{enumerate}[(a)]
\item \emph{Perfect Homeostasis (Adaptation)} on the interval $(\mathcal{I}_{1}, \mathcal{I}_{2})$ if
\begin{equation} \label{definition_perfect_adaptation}
  \frac{d z}{d \mathcal{I}} (\mathcal{I}) = 0 
  \qquad\text{for all} \; \mathcal{I} \in (\mathcal{I}_{1}, \mathcal{I}_{2})
\end{equation}
That is, $z$ is constant on $(\mathcal{I}_{1}, \mathcal{I}_{2})$.
\item \emph{Near-perfect Homeostasis (Adaptation)} relative to a \emph{set point} $\mathcal{I}_{\mathrm{sp}}$ on the interval $(\mathcal{I}_{1}, \mathcal{I}_{2})$ if, for a fixed $\delta$,
\begin{equation} \label{definition_near_perfect_adaptation}
  | z(\mathcal{I}) - z(\mathcal{I}_{\mathrm{sp}}) | \leqslant \delta
  \qquad\text{for all} \; \mathcal{I} \in (\mathcal{I}_{1}, \mathcal{I}_{2})
\end{equation}
That is, $z$ stays within $z(\mathcal{I}_{\mathrm{sp}})\pm\delta$ over $(\mathcal{I}_{1}, \mathcal{I}_{2})$.
\item \emph{Infinitesimal Homeostasis} at the point $\mathcal{I}_{\mathrm{c}}$ on the interval $(\mathcal{I}_{1}, \mathcal{I}_{2})$ if
\begin{equation} \label{definition_homeostasis}
  \frac{d z}{d \mathcal{I}} (\mathcal{I}_{\mathrm{c}}) = 0
\end{equation}
That is, $\mathcal{I}_{\mathrm{c}}$ is a \emph{critical point} of
$z$.
\end{enumerate}

It is clear that perfect homeostasis implies near-perfect homeostasis, but the converse does not hold.  
Inspired by Reed \etal \cite{nijhout14,best09}, Golubitsky and Stewart \cite{gs17,gs18} introduced the notion of infinitesimal homeostasis that is intermediate between perfect and near-perfect homeostasis.
It is obvious that perfect homeostasis implies infinitesimal homeostasis. 
On the other hand, it follows from Taylor's theorem that infinitesimal homeostasis implies near-perfect homeostasis in a neighborhood of $\mathcal{I}_0$. 
It is easy to see that the converse to both implications is not generally valid (see \cite{reed17}).
Moreover, the notion of infinitesimal homeostasis allows the tools from singularity theory to bear on the study of homeostasis.

When combined with coupled systems theory~\cite{gs06} the formalism of \cite{gs17,gs18,gsahy20} becomes very effective in the analysis of model equations.

An \emph{input-output network} is a network $\mathcal{G}$ with a distinguished \emph{input node} $\iota$, associated to the input parameter $\mathcal{I}$, one distinguished \emph{output node} $o$, and $N$ \emph{regulatory nodes} $\rho=\{\rho_1,\ldots,\rho_N\}$.
The associated network systems of differential equations have the form
\begin{equation} \label{admissible_systems_ODE_multiple_input_nodes}
\begin{aligned}
\dot{x}_{\iota} & = f_{\iota}(x_{\iota}, x_{\rho}, x_{o}, \mathcal{I}) \\
\dot{x}_{\rho} & = f_{\rho}(x_{\iota}, x_{\rho}, x_{o})\\
\dot{x}_{o} & = f_{o}(x_{\iota}, x_{\rho}, x_{o})
\end{aligned}
\end{equation}
where $\mathcal{I}\in\mathbb{R}$ is an \emph{external input parameter} and $X=(x_{\iota},x_{\rho},x_o)\in\mathbb{R}\times\mathbb{R}^N\times\mathbb{R}$ is the vector of state variables associated to the network nodes. We write a vector field associated with the system \eqref{admissible_systems_ODE_multiple_input_nodes} as
\[
F(X,\mathcal{I})=(f_{\iota}(X,\mathcal{I}),f_\rho(X),f_o(X))
\]
and call it an \emph{admissible vector filed} for the network $\mathcal{G}$.

Let $f_{j,x_\ell}$ denote the partial derivative of the $j^{th}$ node function $f_j$ with respect to the $\ell^{th}$ node variable $x_\ell$.  We make the following assumptions about the vector field $F$ throughout:
\begin{enumerate}[(a)]
\item The vector field $F$ is smooth and has an asymptotically stable equilibrium at $(X^*,\mathcal{I}^*)$.
Therefore, by the implicit function theorem, there is a function $\tilde{X}(\mathcal{I})$ defined in a neighborhood of $\mathcal{I}^*$ such that $\tilde{X}(\mathcal{I}^*) = X^*$ and $F(\tilde{X}(\mathcal{I}), \mathcal{I}) \equiv 0$. 
\item The partial derivative $f_{j,x_\ell}$ can be non-zero only if the network $\mathcal{G}$ has an arrow $\ell\to j$, otherwise $f_{j,x_\ell} \equiv 0$.
\item Only the input node coordinate function $f_{\iota}$ depends on the external input parameter $\mathcal{I}$ and the partial derivative of $f_{\iota,\mathcal{I}}$ generically satisfies
\begin{equation} \label{e:f_iota_I}
 f_{\iota,\mathcal{I}} \neq 0.
\end{equation}
\end{enumerate}
The mapping $\mathcal{I} \mapsto x_o(\mathcal{I})$ is called the \emph{input-output function} of the input-output network $\mathcal{G}$ (associated to the family of equilibria $\tilde{X}(\mathcal{I})$).

As noted previously \cite{gs17,gsahy20,reed17,wang20}, a straightforward application of Cramer's rule gives a simple formula for determining infinitesimal homeostasis points.
Let $J$ be the $(N+2)\times (N+2)$ Jacobian matrix of an admissible vector field $F=(f_{\iota},f_{\sigma},f_{o})$, that is,
\begin{equation} \label{jacobian}
J = \begin{pmatrix}
  f_{\iota, x_\iota}   &  f_{\iota, x_\rho} & f_{\iota, x_o} \\
  f_{\rho, x_\iota}   &  f_{\rho, x_\rho} & f_{\rho, x_o} \\
  f_{o, x_\iota} &  f_{o, x_\rho} & f_{o, x_o} 
\end{pmatrix}
\end{equation}
The $(N+1)\times (N+1)$ matrix $H$ obtained from $J$ by dropping the last column and the first row is called \emph{homeostasis matrix} of $\mathcal{G}$:
\begin{equation} \label{homeostasis_matrix_definition}
H = 
\begin{pmatrix}
f_{\rho, x_\iota}&  f_{\rho, x_\rho} \\
f_{o, x_\iota} &  f_{o, x_\rho}
\end{pmatrix}
\end{equation}
In both eqs. \eqref{jacobian} and \eqref{homeostasis_matrix_definition} partial derivatives $f_{\ell,x_j}$ are evaluated at $\big(\tilde{X}(\mathcal{I}),\mathcal{I}\big)$.

\begin{lemma} \label{cramer_rule}
The input-output function $x_o(\mathcal{I})$ of an input-output network $\mathcal{G}$ satisfies
\begin{equation} \label{xo'}
x_o'(\mathcal{I}) = -f_{\iota,\mathcal{I}} \, \frac{\det(H)}{\det(J)}
\end{equation}
Here, $x_o'$ is the derivative of $x_o$ with respect to $\mathcal{I}$ and $\det(J)$, $\det(H)$ are evaluated at $\big(\tilde{X}(\mathcal{I}),\mathcal{I}\big)$.
Hence, $\mathcal{I}_0$ is a point of infinitesimal homeostasis if and only if
\begin{equation} \label{xo'_reduced}
\det(H) = 0
\end{equation}
at the equilibrium $\big(\tilde{X}(\mathcal{I}_0),\mathcal{I}_0\big)$.
\end{lemma}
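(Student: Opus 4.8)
The plan is to differentiate the equilibrium identity along the branch $\tilde{X}(\mathcal{I})$ and then solve the resulting linear system by Cramer's rule. By assumption (a) we have $F\big(\tilde{X}(\mathcal{I}),\mathcal{I}\big)\equiv 0$ on a neighbourhood of $\mathcal{I}^*$; differentiating in $\mathcal{I}$ and using the chain rule gives
\[
  J\,\tilde{X}'(\mathcal{I}) \;=\; -\,\partial_{\mathcal{I}}F\big(\tilde{X}(\mathcal{I}),\mathcal{I}\big),
\]
where $J$ is the Jacobian \eqref{jacobian} evaluated at $\big(\tilde{X}(\mathcal{I}),\mathcal{I}\big)$. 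By assumption (c) only $f_{\iota}$ depends on $\mathcal{I}$, so the right-hand side is $-f_{\iota,\mathcal{I}}\,e_{1}$, where $e_{1}$ is the standard basis vector corresponding to the input-node slot. Since $(X^*,\mathcal{I}^*)$ is asymptotically stable, every eigenvalue of $J$ has negative real part, hence $\det J\neq 0$ throughout the branch and the system has the unique solution $\tilde{X}'(\mathcal{I}) = -f_{\iota,\mathcal{I}}\,J^{-1}e_{1}$.

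Next I would read off the output coordinate. With the state ordered as $(x_{\iota},x_{\rho},x_{o})$, the quantity $x_o'(\mathcal{I})$ is the last entry of $-f_{\iota,\mathcal{I}}\,J^{-1}e_{1}$, that is, $-f_{\iota,\mathcal{I}}$ times the (output-row, input-column) entry of $J^{-1}$. By Cramer's rule (equivalently, the adjugate formula), this entry equals $(\det J)^{-1}$ times the cofactor of $J$ in the (input-row, output-column) position; the associated minor is the determinant of the matrix obtained from $J$ by deleting its first row and its last column, which is precisely the homeostasis matrix $H$ of \eqref{homeostasis_matrix_definition}. Assembling the pieces — the overall factor $-f_{\iota,\mathcal{I}}$, the cofactor sign, and $\det H$ over $\det J$ — yields \eqref{xo'}. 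Finally, since $f_{\iota,\mathcal{I}}\neq 0$ by \eqref{e:f_iota_I} and $\det J\neq 0$ throughout the branch, formula \eqref{xo'} forces $x_o'(\mathcal{I}_0)=0$ if and only if $\det H=0$ at $\big(\tilde{X}(\mathcal{I}_0),\mathcal{I}_0\big)$, which is \eqref{xo'_reduced}.

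The main — and essentially only — point requiring care is the sign bookkeeping in the cofactor expansion: one must verify that the cofactor sign $(-1)^{1+(N+2)}$, together with the ordering of the retained rows and columns adopted in the definition of $H$, combines into exactly the minus sign displayed in \eqref{xo'}. This is a routine parity check, and in any case it is immaterial to the characterization \eqref{xo'_reduced}, for which only the non-vanishing of $f_{\iota,\mathcal{I}}$ and of $\det J$ is used. Everything else is just the implicit function theorem (already invoked in assumption (a)) together with elementary linear algebra, so I anticipate no genuine obstacle.
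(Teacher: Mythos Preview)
Your argument is correct and is exactly the standard Cramer's-rule computation; the paper does not supply its own proof here but simply refers the reader to \cite{gsahy20,wang20}, where precisely this derivation appears. Your flag about the cofactor sign is apt---with $H$ taken as the bare minor in \eqref{homeostasis_matrix_definition} the parity factor $(-1)^{N+1}$ need not be $+1$ for all $N$---but, as you observe, this has no bearing on the characterization \eqref{xo'_reduced}.
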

\begin{proof}
See~\cite{gsahy20, wang20}.
\end{proof}

\section{Infinitesimal Homeostasis at a Boundary Point}

In this section we extend the theory of~\cite{gs17,gs18,wang20,gsahy20} to the case where the input-output function satisfies the near-perfect homeostasis condition on an open interval and the infinitesimal homeostasis occurs at a boundary point.

\subsection{Asymptotic Infinitesimal Homeostasis}

Consider a network $\mathcal{G}$ such that the associated input-output function $z$ is defined on a semi-infinite interval $D=(\mathcal{I}_{0}, +\infty)$.

\begin{theorem}
Let $z:D \to \mathbb{R}$ be a smooth function, with $D=(\mathcal{I}_{0}, +\infty)$.
Suppose that $z$ satisfies the near-perfect homeostasis condition on $D$: for all $\mathcal{I} \in D, z(\mathcal{I}) \in (z(\mathcal{I}_{\mathrm{sp}}) - \delta, z(\mathcal{I}_{\mathrm{sp}}) + \delta)$, for some $\mathcal{I}_{\mathrm{sp}} \in D$ and fixed $\delta > 0$. Then, at least one of the following statements is true:
\begin{enumerate}[(i)]
\item There exists $\mathcal{I}_{c} \in D$ such that $z'(\mathcal{I}_{c}) = 0$,
\item There exists an increasing sequence $(\mathcal{I}_{n})_{n \geqslant 1} \subset D$ satisfying
\[
\lim_{n \to \infty} \mathcal{I}_{n} = + \infty 
\qquad\text{and}\qquad  
\lim_{n \to \infty} z'(\mathcal{I}_{n}) = 0. 
\]
\end{enumerate}
In particular, if $z'$ is a monotonic function, then $\lim\limits_{\mathcal{I} \to +\infty} z'(\mathcal{I}) = 0$.
\end{theorem}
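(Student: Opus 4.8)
The plan is to prove the dichotomy by contraposition and then read off the final assertion.

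\textbf{Step 1: reduce to a bounded monotone function.} Near-perfect homeostasis gives $|z(\mathcal{I}) - z(\mathcal{I}_{\mathrm{sp}})| < \delta$ for all $\mathcal{I} \in D$, so $z$ is bounded on $D$. Assume now that (i) fails, i.e. $z'(\mathcal{I}) \neq 0$ for every $\mathcal{I} \in D$. Since $z$ is smooth, $z'$ is continuous on the interval $D$, so by the intermediate value theorem $z'$ cannot change sign; hence $z'$ has a fixed sign on $D$ and $z$ is \emph{strictly monotone}. A strictly monotone function that is bounded on $(\mathcal{I}_0, +\infty)$ has a finite limit $L := \lim_{\mathcal{I}\to+\infty} z(\mathcal{I})$.

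\textbf{Step 2: extract the sequence via the mean value theorem.} Fix an integer $N_0 > \mathcal{I}_0$. For each $n \geqslant N_0$ we have $[n, n+1] \subset D$, so the mean value theorem provides $\mathcal{I}_n \in (n, n+1)$ with $z'(\mathcal{I}_n) = z(n+1) - z(n)$. The open intervals $(n, n+1)$ are disjoint and ordered, so $(\mathcal{I}_n)_{n \geqslant N_0}$ is automatically strictly increasing with $\mathcal{I}_n \to +\infty$; moreover $z'(\mathcal{I}_n) = z(n+1) - z(n) \to L - L = 0$. This is exactly statement (ii), so (i) or (ii) always holds.

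\textbf{Step 3: the monotone case.} If $z'$ is monotone, then $L' := \lim_{\mathcal{I}\to+\infty} z'(\mathcal{I})$ exists in $[-\infty, +\infty]$. If $L' > 0$ (including $L' = +\infty$), choose $M > \mathcal{I}_0$ and $c > 0$ with $z'(\mathcal{I}) \geqslant c$ for all $\mathcal{I} \geqslant M$; integrating yields $z(\mathcal{I}) \geqslant z(M) + c(\mathcal{I} - M) \to +\infty$, contradicting the boundedness of $z$. The case $L' < 0$ is symmetric, so $L' = 0$. (Equivalently one may invoke the dichotomy: in case (ii) a subsequence of the monotone quantity $z'$ tends to $0$, and in case (i) monotonicity of $z'$ together with $z'(\mathcal{I}_c) = 0$ forces $z' \to 0$ by the same integration argument.)

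The step I expect to be the crux is Step 1: it is precisely the failure of (i), via the intermediate value theorem, that upgrades boundedness of $z$ to strict monotonicity, which is what makes $\lim_{\mathcal{I}\to+\infty} z(\mathcal{I})$ exist — without this upgrade, boundedness alone does not pin down $z'$ along any sequence. I would also be careful that $D$ is only a half-line, so the mean value theorem is applied only on intervals $[n,n+1]$ with $n > \mathcal{I}_0$, and that the ``increasing sequence'' requested in (ii) is produced for free by the nesting of these intervals rather than by a separate subsequence extraction.
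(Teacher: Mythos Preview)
Your proof is correct. Both you and the paper begin identically: if (i) fails, the intermediate value theorem forces $z'$ to have constant sign, so $z$ is strictly monotone and bounded. The divergence is in how the sequence for (ii) is extracted. The paper partitions $D$ into dyadic intervals $J_n$ of length $2^n$, uses the bound $\int_{J_n} z' \leqslant 2\delta$ coming directly from the near-perfect homeostasis hypothesis, and concludes $\inf_{J_n} z' \leqslant \delta/2^{n-1}$; this yields an explicit rate $z'(\mathcal{I}_n) \leqslant 2\delta/2^{n-1}$. You instead first pass to the limit $L = \lim_{\mathcal{I}\to\infty} z(\mathcal{I})$ (available because $z$ is monotone and bounded) and then apply the mean value theorem on unit intervals $[n,n+1]$ to get $z'(\mathcal{I}_n) = z(n+1) - z(n) \to 0$. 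Your route is cleaner and avoids the dyadic bookkeeping; the paper's route has the minor advantage of a quantitative bound. For the final assertion about monotone $z'$, the paper simply declares it ``obvious,'' while your integration argument in Step~3 actually supplies the missing justification.
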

\begin{proof}
Suppose there exists $\mathcal{I}_{1}, \mathcal{I}_{2} \in D$ such that $z'(\mathcal{I}_{1}) \cdot z'(\mathcal{I}_{2}) \leqslant 0$. 
If $z'(\mathcal{I}_{1}) \cdot z'(\mathcal{I}_{2}) = 0$, then $z'(\mathcal{I}_{1}) = 0$ or $z'(\mathcal{I}_{2}) = 0$, and thus $(i)$ is true. On the other hand, if $z'(\mathcal{I}_{1}) \cdot z'(\mathcal{I}_{2}) < 0$, then $z'(\mathcal{I}_{1}) > 0$ and $z'(\mathcal{I}_{2}) < 0$ or $z'(\mathcal{I}_{1}) < 0$ and $z'(\mathcal{I}_{2}) > 0$. 
In both cases, by the mean value theorem, there exists $\mathcal{I}^{\star} \in (\mathcal{I}_{1}, \mathcal{I}_{2})$ such that $z'(\mathcal{I}^{\star}) = 0$, and thus $(i)$ is true.

Now suppose that for all $\mathcal{I}_{1}, \mathcal{I}_{2} \in D$, $z'(\mathcal{I}_{1}) \cdot z'(\mathcal{I}_{2}) > 0$. 
This means that $z'(\mathcal{I})$ is either positive or negative over $D$. 
Let us consider the case where $z'(\mathcal{I})$ is positive over $D$ (the other case is analogous). 
Since $z'(D)$ is bounded $\inf z'(D) \geq 0$.
Consider the dyadic sequence $\mathcal{J}_{n}=\sum_{m = 0}^{n} 2^{m}$, for $n \geqslant 0$, and define a family of consecutive disjoint intervals $(J_n)_{n\geqslant 1}$ contained in $D$, of length $2^n$, by
$J_n = \left(\mathcal{I}_{0}+\mathcal{J}_{n-1}, \mathcal{I}_{0} + \mathcal{J}_{n} \right)$.
Hence, one can write
\[
2^n \inf z'(J_n) =
\int_{\mathcal{I}_{0} + \mathcal{J}_{n-1}}^{\mathcal{I}_{0} + \mathcal{J}_{n}} \inf z'(J_n) \,\operatorname{d}\mathcal{I}  
\leqslant 
\int_{\mathcal{I}_{0} + \mathcal{J}_{n-1}}^{\mathcal{I}_{0} + \mathcal{J}_{n-1}} z'(\mathcal{I}) \,\operatorname{d}\mathcal{I}  
\leqslant 2 \delta
\]
and so
\[
 \inf z'(J_n) \leqslant \frac{\delta}{2^{n-1}}
\]
Therefore, there exists $\mathcal{I}_{n} \in J_{n}$, for $n \geqslant 1$, such that
\begin{equation} \label{limit_assimptotic_homeostasis}
 0 < z'(\mathcal{I}_{n}) \leqslant 
 \inf z'(J_n) + \frac{\delta}{2^{n-1}} \leqslant
 \frac{2\delta}{2^{n-1}}
\end{equation}
It is clear that $(\mathcal{I}_{n})_{n \geqslant 1}$ is an increasing sequence with $\lim\limits_{n \to \infty} \mathcal{I}_{n} = \infty$. 
By \eqref{limit_assimptotic_homeostasis}, we conclude that
\[
 \lim_{n \to \infty} z'(\mathcal{I}_{n}) = 0 
\]
and therefore $(ii)$ is true.
Finally, it is obvious that, if $z'$ is a monotonic function, then $\lim\limits_{\mathcal{I} \to \infty} z'(\mathcal{I}) = 0$.
\end{proof}

\begin{definition} \rm
An input-output function $z:D \to \mathbb{R}$, with $D=(\mathcal{I}_{0}, +\infty)$, exhibits \emph{asymptotic infinitesimal homeostasis} if it exhibits near-perfect homeostasis on $D$ and
\[
 \lim_{\mathcal{I} \to \infty} z'(\mathcal{I}) = 0.
\]
\end{definition}

\begin{corollary}
If an input-output function $z:D \to \mathbb{R}$, with $D=(\mathcal{I}_{0}, +\infty)$, exhibits near-perfect homeostasis and is monotonic then it exhibits asymptotic infinitesimal homeostasis.
\end{corollary}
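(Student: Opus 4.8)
The plan is to combine the Theorem with the monotonicity hypothesis, but the Theorem's sequential conclusion will have to be strengthened to obtain the full limit. Since $z$ is an input-output function, it is smooth, and monotonicity makes its derivative sign-definite on $D$; assume $z'\geq 0$, the decreasing case being symmetric. Applying the Theorem, we land in one of its two alternatives: either a point $\mathcal{I}_c$ with $z'(\mathcal{I}_c)=0$, or an increasing sequence $\mathcal{I}_n\to\infty$ with $z'(\mathcal{I}_n)\to 0$. Neither of these by itself yields $\lim_{\mathcal{I}\to\infty}z'(\mathcal{I})=0$. Near-perfect homeostasis on $D$ is part of the hypotheses, so the entire task is to promote sign-definiteness and the sequential statement to this genuine limit, which is exactly what asymptotic infinitesimal homeostasis requires.

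For the upgrade I would first extract the integrability of $z'$ from the hypotheses. Because $z$ is monotonic and, by near-perfect homeostasis, confined to $\big(z(\mathcal{I}_{\mathrm{sp}})-\delta,\,z(\mathcal{I}_{\mathrm{sp}})+\delta\big)$, monotone convergence gives a finite limit $L=\lim_{\mathcal{I}\to\infty}z(\mathcal{I})$, and hence a convergent improper integral of the nonnegative derivative,
\[
\int_{\mathcal{I}_0}^{\infty} z'(\mathcal{I})\,\operatorname{d}\mathcal{I}
= L-\lim_{\mathcal{I}\to\mathcal{I}_0^{+}}z(\mathcal{I})<\infty .
\]
The tool that converts an integrable, nonnegative derivative into the pointwise limit $z'\to 0$ is Barbalat's lemma: if $z'$ is uniformly continuous on $D$ and $\int_{\mathcal{I}_0}^{\infty}z'$ converges, then $z'(\mathcal{I})\to 0$ as $\mathcal{I}\to\infty$. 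Granting uniform continuity, this closes the argument, since integrability of $z'$ is already in hand.

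The hard part will be securing the uniform continuity of $z'$ that Barbalat's lemma demands: smoothness alone gives only continuity, and on the unbounded interval $D$ a bounded monotone $z$ can a priori carry a derivative with narrow, non-decaying spikes, so that $\liminf z'=0$ while $\lim z'\neq 0$. To rule this out I would establish a bound $|z''|\leq M$ on $D$, which makes $z'$ Lipschitz and hence uniformly continuous. Here I would exploit that $z$ is an input-output function: by Lemma~\ref{cramer_rule}, $z'=-f_{\iota,\mathcal{I}}\,\det(H)/\det(J)$ is a rational expression in the partial derivatives of $F$ evaluated along the equilibrium branch $\tilde{X}(\mathcal{I})$, with $\det(J)\neq 0$ near the asymptotically stable equilibrium. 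Differentiating this expression once more in $\mathcal{I}$ and controlling $\det(J)^{-1}$ then yields the desired bound on $z''$, provided the branch $\tilde{X}(\mathcal{I})$ remains in a compact region so that the relevant partial derivatives of $F$ and the factor $\det(J)^{-1}$ stay bounded. This last point is the crux: near-perfect homeostasis bounds only the output coordinate $\tilde{x}_o=z$, so the remaining work is to argue that the full equilibrium branch stays bounded (or to assume it), thereby converting the stability and smoothness hypotheses into the estimate on $z''$. With that estimate established, the reduction to the Theorem together with Barbalat's lemma completes the proof.
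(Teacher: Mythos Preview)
The paper offers no separate proof of the Corollary: it is meant to be an immediate consequence of the final clause of the preceding Theorem, which states that if $z'$ is monotonic then $\lim_{\mathcal{I}\to\infty} z'(\mathcal{I})=0$. Note the discrepancy: that clause hypothesizes monotonicity of $z'$, whereas the Corollary as written hypothesizes monotonicity of $z$. With the hypothesis read as $z'$ monotonic (almost certainly the intent, given how the Corollary is placed), the argument is one line: a monotone $z'$ has a limit in $[-\infty,\infty]$, and boundedness of $z$ from near-perfect homeostasis (or, equivalently, the sequence supplied by alternative (ii) of the Theorem) forces that limit to be $0$.

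You have taken the hypothesis literally as $z$ monotonic and correctly observed that neither alternative of the Theorem then delivers the full limit $z'\to 0$. Your route through integrability of $z'$ and Barbalat's lemma is the natural repair, and you are candid that everything hinges on uniform continuity of $z'$, hence on a bound for $z''$. The difficulty is that your proposed source for that bound, compactness of the equilibrium branch $\tilde{X}(\mathcal{I})$, is not available in general: the paper's own worked examples have unbounded branches (in the immune model $\tilde{x}_\iota\to\infty$ as $\mathcal{I}\to\infty$), and near-perfect homeostasis constrains only the output coordinate $\tilde{x}_o$. Without compactness there is no control on $\det(J)^{-1}$ or on the higher partials of $F$ along the branch, so the Barbalat step cannot be completed from the stated hypotheses alone. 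In summary: under the intended reading $z'$ monotonic, the Corollary is immediate and your machinery is unnecessary; under the literal reading $z$ monotonic, you have correctly located the gap, but the closure you sketch does not go through.
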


\subsection{Core Networks and Asymptotic Infinitesimal Homeostasis}

Golubitsky et al.~\cite{wang20} have shown that in order to analyse if an input-output network exhibits infinitesimal homeostasis, it is enough to study an associated \emph{core network}, i.e., a network in which every node is downstream from the input node $\iota$ and upstream from the output node $o$. We will show that this theorem extends to the case of asymptotic infinitesimal homeostasis.

Let $\mathcal{G}$ be an input-output network with input node $\iota$, output node $o$ and regulatory nodes $\rho$. 
Partition the nodes of $\mathcal{G}$ three types:
\begin{itemize}
\item those nodes $\sigma$ that are both upstream from $o$ and downstream from $\iota$,
\item those nodes $d$ that are not downstream from $\iota$
\item those nodes $u$ which are downstream from $\iota$, but not upstream from $o$
\end{itemize}
Figure~\ref{core_network_abstract} exhibits this partition of regulatory nodes of $\mathcal{G}$.

\begin{figure}[!ht]
\centering
\includegraphics[scale = 0.3,trim=5cm 4cm 5cm 3cm,clip=true]%
{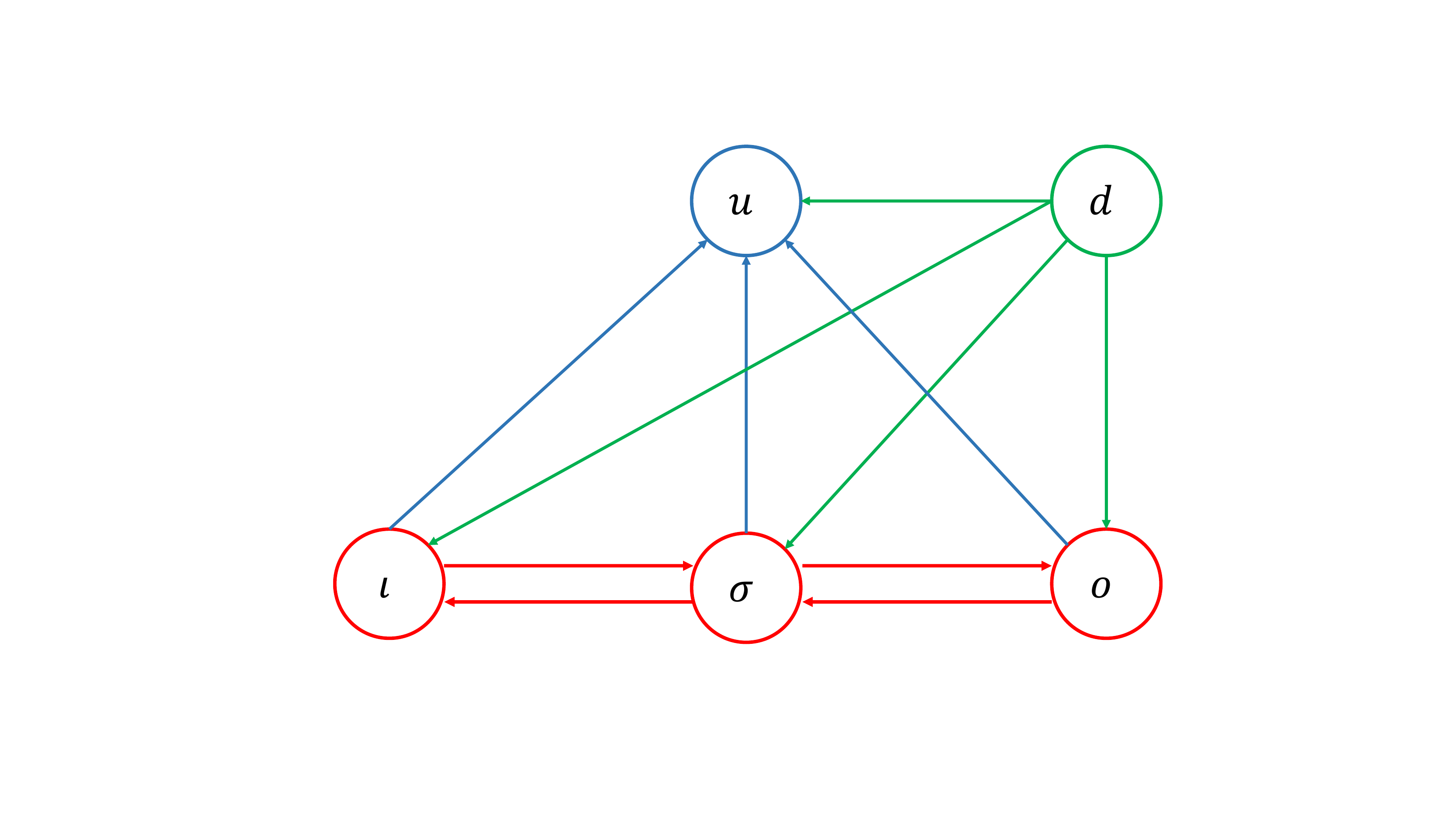}
\caption{\label{core_network_abstract}
Partition of nodes of $\mathcal{G}$.
Subnetwork in red is the core network $\mathcal{G}_{c}$.}
\end{figure}

The generic system of ODEs associated to the original network $\mathcal{G}$ is given by
\begin{equation} \label{admissible_system_original_network_abstract}
\begin{aligned}
    & \dot{x}_{\iota} = f_{\iota}(x_{\iota}, x_{\sigma}, x_{d}, x_{u}, x_{o}, \mathcal{I}) \\
    & \dot{x}_{\sigma} = f_{\sigma}(x_{\iota}, x_{\sigma}, x_{d}, x_{u}, x_{o}) \\
    & \dot{x}_{d} = f_{d}(x_{\iota}, x_{\sigma}, x_{d}, x_{u}, x_{o}) \\
    & \dot{x}_{u} = f_{u}(x_{\iota}, x_{\sigma}, x_{d}, x_{u}, x_{o}) \\
    & \dot{x}_{o} = f_{o}(x_{\iota}, x_{\sigma}, x_{d}, x_{u}, x_{o})
\end{aligned}
\end{equation}
The reduced systems of ODEs associated to the core network $\mathcal{G}_c$ obtained from \eqref{admissible_system_original_network_abstract} is given by
\begin{equation} \label{admissible_system_core_network_abstract}
\begin{aligned}
    & \dot{x}_{\iota} = f_{\iota}(x_{\iota}, x_{\sigma}, x_{d}, x_{o}, \mathcal{I}) \\
    & \dot{x}_{\sigma} = f_{\sigma}(x_{\iota}, x_{\sigma}, x_{d}, x_{o}) \\
    & \dot{x}_{d} = f_{d}(x_{d}) \\
    & \dot{x}_{u} = f_{u}(x_{\iota}, x_{\sigma}, x_{d}, x_{u}, x_{o}) \\
    & \dot{x}_{o} = f_{o}(x_{\iota}, x_{\sigma}, x_{d}, x_{o})
\end{aligned}
\end{equation}

\begin{theorem} \label{thm-core-networks}
Let $x_{o}(\mathcal{I})$ be the input-output function of the admissible system \eqref{admissible_system_original_network_abstract} and let $x_{o}^{c}(\mathcal{I})$ be the input-output function of the associated core admissible system \eqref{admissible_system_core_network_abstract}. Consider that both functions are defined in the semi-infinite interval $D=(\mathcal{I}_{0}, +\infty)$. Then, the input-output function $x_{o}^{c}(\mathcal{I})$ associated to the core subnetwork $\mathcal{G}_c$ exhibits asymptotic infinitesimal homeostasis if and only if the input-output function $x_{o}(\mathcal{I})$ associated to the original network $\mathcal{G}$ exhibits asymptotic infinitesimal homeostasis.
\end{theorem}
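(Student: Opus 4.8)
The plan is to deduce Theorem~\ref{thm-core-networks} from the fact that the core reduction leaves the input-output function essentially unchanged, namely that $x_o(\mathcal{I}) = x_o^c(\mathcal{I})$ for all $\mathcal{I}\in D$. Once this is established, the equivalence is immediate from the definition of asymptotic infinitesimal homeostasis, since both of its ingredients --- near-perfect homeostasis on $D$ (a property of the function up to an additive constant) and $\lim_{\mathcal{I}\to+\infty} z'(\mathcal{I}) = 0$ --- depend only on the input-output function. This is the asymptotic counterpart of the core-network theorem of~\cite{wang20}, and I would argue along the same lines.

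The first step is to read off, from assumption~(b) and the node partition, which partial derivatives must vanish identically. If $j$ is a $d$-node, then $\mathcal{G}$ has no arrow into $j$ other than from $d$-nodes: an arrow $\iota\to j$, $\sigma\to j$, $u\to j$ or $o\to j$ would make $j$ downstream from $\iota$ (using that the $\sigma$-nodes, the $u$-nodes and $o$ are all downstream from $\iota$), contradicting the definition of a $d$-node; hence $f_d$ depends on $x_d$ only. Dually, if $j$ is a $u$-node, then $\mathcal{G}$ has no arrow out of $j$ other than to $u$-nodes: an arrow $j\to\iota$, $j\to\sigma$ or $j\to o$ would make $j$ upstream from $o$ (using that $\iota$ and the $\sigma$-nodes are upstream from $o$), and $j\to d$ is excluded by the previous remark; hence none of $f_\iota,f_\sigma,f_d,f_o$ depends on $x_u$. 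These are exactly the simplifications appearing in \eqref{admissible_system_core_network_abstract}, so the generic admissible system \eqref{admissible_system_original_network_abstract} of $\mathcal{G}$ already coincides with the reduced system \eqref{admissible_system_core_network_abstract} of $\mathcal{G}_c$; in particular they share the equilibrium branch $\tilde X(\mathcal{I})$, so $x_o\equiv x_o^c$ on $D$.

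For a derivation carried out through Lemma~\ref{cramer_rule} --- which makes the link with~\cite{wang20} explicit --- I would feed the same vanishing relations into the Cramer formula \eqref{xo'}. Ordering the state variables as $(x_d,x_\iota,x_\sigma,x_o,x_u)$, the Jacobian \eqref{jacobian} becomes block lower-triangular with diagonal blocks $f_{d,x_d}$, the $(\iota,\sigma,o)$-block $\hat J$, and $f_{u,x_u}$, so $\det J = \det(f_{d,x_d})\,\det\hat J\,\det(f_{u,x_u})$; deleting the $\iota$-row and the $x_o$-column, the homeostasis matrix \eqref{homeostasis_matrix_definition} takes the same form with $\hat J$ replaced by the $(\sigma,o)$-by-$(\iota,\sigma)$ block $\hat H$, so $\det H = \det(f_{d,x_d})\,\det\hat H\,\det(f_{u,x_u})$. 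Since $\det J\ne0$ along the branch (the equilibria being hyperbolic by assumption~(a)), the factors $\det(f_{d,x_d})$ and $\det(f_{u,x_u})$ are nonzero and cancel, and \eqref{xo'} yields $x_o'(\mathcal{I}) = -f_{\iota,\mathcal{I}}\,\det\hat H/\det\hat J$; as $\hat J$ and $\hat H$ are precisely the Jacobian and homeostasis matrix of \eqref{admissible_system_core_network_abstract}, this equals $(x_o^c)'(\mathcal{I})$, so $x_o - x_o^c$ is constant on $D$.

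It then remains only to unwind definitions: from $x_o\equiv x_o^c$ (or, on the second route, $x_o - x_o^c$ constant) one gets that \eqref{definition_near_perfect_adaptation} holds for $x_o$ on $D$ with some $\delta$ and set point iff it holds for $x_o^c$, and that $\lim_{\mathcal{I}\to+\infty} x_o'(\mathcal{I}) = \lim_{\mathcal{I}\to+\infty}(x_o^c)'(\mathcal{I})$; hence $x_o^c$ exhibits asymptotic infinitesimal homeostasis iff $x_o$ does. I do not expect a genuine obstacle here: the work is the structural bookkeeping of the second paragraph (identifying the forbidden arrows and the resulting block-triangular shape of $J$ and $H$) together with the remark that the common determinantal factors are nonzero. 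In particular, the delicate behaviour at the boundary point $+\infty$ that was the crux of the previous theorem does not intervene, because $x_o$ and $x_o^c$ are compared pointwise on all of $D$.
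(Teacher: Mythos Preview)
Your second route --- block-triangularising $J$ and $H$ and cancelling the common factor $\det(f_{d,x_d})\det(f_{u,x_u})$, nonzero because the eigenvalues of these blocks are eigenvalues of $J$ --- is exactly the paper's proof. Your first route, observing that assumption~(b) already forces the generic admissible system \eqref{admissible_system_original_network_abstract} of $\mathcal{G}$ to coincide with \eqref{admissible_system_core_network_abstract} so that $x_o\equiv x_o^c$ on $D$ outright, is a cleaner shortcut the paper does not take: it bypasses the determinant algebra entirely and makes transparent that nothing specific to the boundary point $+\infty$ is at stake. You are also a bit more thorough than the paper in explicitly verifying both ingredients of asymptotic infinitesimal homeostasis (the near-perfect homeostasis bound \eqref{definition_near_perfect_adaptation} and the limit of the derivative), whereas the paper's displayed argument only treats the limit $\lim_{\mathcal{I}\to\infty}\det H/\det J$.
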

\begin{proof}
The Jacobian $J$ the original network is
\begin{equation}
J = \begin{pmatrix}
    f_{\iota, x_{\iota}} & f_{\iota, x_{\sigma}} & f_{\iota, x_{d}} & 0 & f_{\iota, x_{o}} \\
    f_{\sigma, x_{\iota}} & f_{\sigma, x_{\sigma}} & f_{\sigma, x_{d}} & 0 & f_{\sigma, x_{o}} \\
    0 & 0 & f_{d, x_{d}} & 0 & 0 \\
    f_{u, x_{\iota}} & f_{u, x_{\sigma}} & f_{u, x_{d}} & f_{u, x_{u}} & f_{u, x_{o}} \\
    f_{o, x_{\iota}} & f_{o, x_{\sigma}} & f_{o, x_{d}} & 0 & f_{o, x_{o}}
\end{pmatrix}
\end{equation}
and the corresponding homeostasis matrix $H$ is
\begin{equation}
H = \begin{pmatrix}
    f_{\sigma, x_{\iota}} & f_{\sigma, x_{\sigma}} & f_{\sigma, x_{d}} & 0 \\
    0 & 0 & f_{d, x_{d}} & 0 \\
    f_{u, x_{\iota}} & f_{u, x_{\sigma}} & f_{u, x_{d}} & f_{u, x_{u}} \\
    f_{o, x_{\iota}} & f_{o, x_{\sigma}} & f_{o, x_{d}} & 0
\end{pmatrix}
\end{equation}
On the other hand, the Jacobian $J^{c}$ and the homeostasis matrix $H^{c}$ of the core network are, respectively:
\begin{equation}
J^{c} = \begin{pmatrix}
    f_{\iota, x_{\iota}} & f_{\iota, x_{\sigma}} & f_{\iota, x_{o}} \\
    f_{\sigma, x_{\iota}} & f_{\sigma, x_{\sigma}} & f_{\sigma, x_{o}} \\
    f_{o, x_{\iota}} & f_{o, x_{\sigma}} & f_{o, x_{o}}
\end{pmatrix}
\qquad\text{and}\qquad
H^{c} = \begin{pmatrix}
    f_{\sigma, x_{\iota}} & f_{\sigma, x_{\sigma}} \\
    f_{o, x_{\iota}} & f_{o, x_{\sigma}}
\end{pmatrix}
\end{equation}
Then we can compute
\begin{equation}
\begin{split}
    \det H & = (-1)^{k_{H}}\det(f_{d, x_{d}})\det(f_{u, x_{u}})\det H^{c} \\
    \det J & = (-1)^{k_{J}}\det(f_{d, x_{d}})\det(f_{u, x_{u}})\det J^{c}
\end{split}
\end{equation}
Now, for all $\mathcal{I} \in D$, $J$ and $J^{c}$ must have eigenvalues with negative real part. As the eigenvalues of $f_{d, x_{d}}$ and of $f_{u, x_{u}}$ are also eigenvalues of $J$, we conclude that
\begin{equation}
    \det(f_{d, x_{d}}) \cdot \det(f_{u, x_{u}}) \neq 0
\end{equation}
Therefore
\begin{equation}
    \lim_{\mathcal{I} \to \infty} \frac{\det H}{\det J} =  \lim_{\mathcal{I} \to \infty} \frac{(-1)^{k_{H}}\det(f_{d, x_{d}})\det(f_{u, x_{u}})\det H^{c}}{(-1)^{k_{J}}\det(f_{d, x_{d}})\det(f_{u, x_{u}})\det J^{c}} = (-1)^{k} \lim_{\mathcal{I} \to \infty} \frac{\det H^{c}}{\det J^{c}}
\end{equation}
which concludes the proof.
\end{proof}

\begin{remark} \rm
The results of this section were obtained by considering an input-output function $z$ defined on a semi-infinite interval $D=(\mathcal{I}_{0}, +\infty)$. However, it is easy to see that they can be extended to the case where $z$ is defined on any finite open interval $D=(\mathcal{I}_{0}, \mathcal{I}_{\star})$, where $\mathcal{I}_{\star}$ is the point of asymptotic infinitesimal homeostasis, namely,
\[
 \lim_{\mathcal{I} \to \mathcal{I}_{\star}} z'(\mathcal{I}) = 0.
\]
In any case the point of infinitesimal homeostasis is on the boundary of definition of the input-output function. \END
\end{remark}

\section{Self Immune Recognition}

\subsection{Brief Review of Immune Recognition}

The immune system has a paramount role in mammalian physiology: it must combat any strange body and infection, and, at the same time, it must discriminate between which elements belong to the organism and which not in order to avoid autoimmunity, something know in the literature as self and non-self recognition. Although specificity of receptors expressed by immune cells is a major mechanism that explains the capacity of discrimination between self and non-self components, conventional T lymphocytes in tissues may still be erroneously activated leading to autoimmunity and cell injury \cite{abbas14}.

Let's consider here the three main immune cells that are present in tissues: antigen-presenting cells (APCs), responsible for initiating the immune response, conventional T lymphocytes (Tconv), which are the main cells responsible for a specific response against non-self pathogens, and regulatory T lymphocytes(Treg), which suppress Tconv activity \cite{abbas14, sok18, khailaie13}. Usually, the immune response starts when APCs take digested antigens and couple them to MHC molecules expressed in APCs surface \cite{khailaie13}. This enables the recognition of the antigen by Tconv. When activates, Tconv cells synthesize interleukin-2 (IL2), which stimulates both Tconv and Treg cells. On the other hand, Treg interacts to Tconv, particularly with autoreactive Tconv, supressing their activity \cite{abbas14, khailaie13}.

The importance of Treg cells may be exemplified by the fact that patients with pathogenic variants in \textit{FOXP3} gene leading to Treg cells dysfunction develop an autoimmune syndrome called IPEX (Immune dysregulation, polyendocrinopathy, enteropathy, X-linked syndrome) \cite{bacchetta18}.

\subsection{Mathematical Model}

In order to evaluate how the concept of infinitesimal homeostasis could be applied in the context of autoimmune activation, we shall adapt a model previously published by Khailaie \etal \cite{khailaie13}, that considers a situation where the only existing antigen are self. This version of the model is given by the interplay between four components: APCs, Tconv, IL2 and Treg. Representing the dimensionless concentrations of APCs, Tconv, IL2 and Treg by, respectively, $x_{\tau}, x_{o}, x_{\sigma}$ and $x_{\iota}$, with $\mathcal{I}$ the input parameter, the dynamics is described by the systems of ODEs
\begin{equation} \label{simplified_immune_model}
\begin{aligned}
& \dot{x}_\iota = a x_\iota x_\sigma - b x_\iota + \mathcal{I}  \\
& \dot{x}_\sigma = c x_o - d x_\sigma (x_o + x_{\iota}) - e x_\sigma \\
& \dot{x}_\tau = - b x_\tau + f \frac{x_\tau}{x_\tau + g} + h x_\tau x_o + j \\
& \dot{x}_o = a x_\sigma x_o - b x_o - l x_\iota x_o + h x_\tau x_o + j
\end{aligned}
\end{equation}
where $a, b, c, d, e, f, g, h, j$ and $l$ are positive parameters.
Considering $\iota$, $\sigma$, $\tau$ and $o$ as the nodes of a network we obtain the network in Figure \ref{four_nodes_example_feedback}(B).

\subsection{Infinitesimal Homeostasis}

The model \eqref{simplified_immune_model} can have only two types of homeostasis: structural and null-degradation. The homeostasis matrix $H$ of the network is
\begin{equation}
  H = \begin{pmatrix}
  f_{\iota, \iota} & f_{\iota, \sigma} & 0 & 1 \\
  f_{\sigma, \iota} & f_{\sigma, \sigma} & 0 & 0 \\
  0 & 0 & f_{\tau, \tau} & 0 \\
  f_{o, \iota} & f_{o, \sigma} & f_{o, \tau}  & 0
  \end{pmatrix}
\end{equation}
Thus
\begin{equation}
    \det H = f_{\tau, \tau}(f_{o, \sigma}f_{\sigma, \iota} - f_{o, \iota}f_{\sigma, \sigma})
\end{equation}

Let us show that $\det H \neq 0$ for all $\mathcal{I} \in \mathbb{R}^{+}$ at any equilibrium point. In fact, considering the ODE in \eqref{simplified_immune_model}, in any equilibrium we must have $x_o \neq 0$ (we may conclude this looking to the equation that defines $\dot{x}_o$). Fixing always the initial state as $(0,0,0,0)$, then it is easy to verify that it is plausible to assume that $x_{\tau}, x_{o}, x_{\sigma}$ and $x_{\iota}$ must be non-negative at equilibrium. 
Now, observe that
\begin{equation}
    \begin{aligned}
    & f_{\sigma, \iota} = -ex_\sigma \\
    & f_{\sigma, \sigma} = -e(x_o + x_\iota) - f \\
    & f_{o, \iota} = - l x_o \\
    & f_{o, \sigma} = ax_{o}
    \end{aligned}
\end{equation}
Therefore
\begin{equation} \label{structural_homeostasis_immune}
\begin{aligned}
    & f_{o, \sigma}f_{\sigma, \iota} - f_{o, \iota}f_{\sigma, \sigma} = ax_{o}(-ex_\sigma) - l x_o \left[e(x_o + x_\iota) + f \right] \\
    & \Rightarrow f_{o, \sigma}f_{\sigma, \iota} - f_{o, \iota}f_{\sigma, \sigma} = - x_{o} [aex_\sigma + lex_\iota + lex_o + lf]
\end{aligned}
\end{equation}
As for all $\mathcal{I} > 0$ for which the system admits a linearly stable equilibrium, the equilibrium point $(\tilde{x}_\iota, \tilde{x}_\sigma, \tilde{x}_\tau, \tilde{x}_o) \in \mathbb{R}^{*}_{+}$, we conclude that the system does not present structural homeostasis.

We shall now prove that it does not present null degradation homeostasis neither. Suppose that there is an equilibrium $(\tilde{x}_\iota, \tilde{x}_\sigma, \tilde{x}_\tau, \tilde{x}_o)$ such that it satisfies $f_{\tau, \tau} = 0$
\begin{equation} \label{immune_model_null_degradation_part_i}
f_{\tau, \tau} = 0 \Rightarrow -b + \frac{fg}{(\tilde{x}_\tau + g)^{2}} + h \tilde{x}_o = 0 \Rightarrow \tilde{x}_o = \frac{b}{h} - \frac{fg}{h(\tilde{x}_\tau + g)^{2}}
\end{equation}
Applying \eqref{immune_model_null_degradation_part_i} to the fact that it must happen in an equilibrium point
\begin{equation}
\begin{split}
    \dot{x}_\tau = 0  & \Rightarrow - b \tilde{x}_\tau + f \frac{\tilde{x}_\tau}{\tilde{x}_\tau + g} + h \tilde{x}_\tau \tilde{x}_o + j = 0 \\
    & \Rightarrow - b \tilde{x}_\tau + f \frac{\tilde{x}_\tau}{\tilde{x}_\tau + g} + h \tilde{x}_\tau \left( \frac{b}{h} - \frac{fg}{h(\tilde{x}_\tau + g)^{2}} \right) + j = 0\\
    & \Rightarrow f \frac{\tilde{x}_\tau}{\tilde{x}_\tau + g} - \frac{fg\tilde{x}_\tau}{(\tilde{x}_\tau + g)^{2}} + j = 0 \\
    & \Rightarrow \frac{f\tilde{x}_\tau(\tilde{x}_\tau + g) - fg\tilde{x}_\tau + j(\tilde{x}_\tau + g)^{2}}{(\tilde{x}_\tau + g)^{2}} = 0 \\
    & \Rightarrow f \tilde{x}_\tau^{2} + j(\tilde{x}_\tau + g)^{2} = 0
\end{split}
\end{equation}
As the last equality cannot hold, the system does not exhibit null degradation homeostasis in node $\tau$.

We already know that the system does not exhibit infinitesimal homeostasis for $\mathcal{I} \in \mathbb{R}^{+}$. Let us study now what happens when $\mathcal{I} \rightarrow + \infty$. For this, we have to write the equilibrium points $(\tilde{x}_\iota, \tilde{x}_\sigma, \tilde{x}_\tau, \tilde{x}_o)$ as a function of $\mathcal{I}$.
First, taking the differential equation for $\dot{x}_\iota$ \eqref{simplified_immune_model}, we conclude that for $\mathcal{I} > 0$, $\tilde{x}_\iota \neq 0$. Consequently, we conclude that
\begin{equation} \label{interleukin_function_Treg}
   \dot{x}_\iota = 0 \Rightarrow a \tilde{x}_\iota \tilde{x}_\sigma - b \tilde{x}_\iota + \mathcal{I} = 0 \Rightarrow \tilde{x}_\sigma = \frac{b\tilde{x}_\iota - \mathcal{I}}{a \tilde{x}_\iota}
\end{equation}
Applying \eqref{interleukin_function_Treg} to the dynamics of $\dot{x}_\sigma$, we obtain
\begin{equation} \label{formula_x_o_immune}
\begin{split}
    \dot{x}_\sigma = 0 
    & \Rightarrow c x_o - d x_\sigma (x_o + x_{\iota}) - e x_\sigma = 0  
    \Rightarrow \tilde{x}_o (c - d\tilde{x}_\sigma) = \tilde{x}_\sigma (d\tilde{x}_\iota + e) \\
    & \Rightarrow \tilde{x}_o = \frac{ \tilde{x}_\sigma (d\tilde{x}_\iota + e)}{c - d\tilde{x}_\sigma} 
    \Rightarrow \tilde{x}_o = \frac{ \displaystyle \left(\frac{b\tilde{x}_\iota - \mathcal{I}}{a \tilde{x}_\iota}\right) (d\tilde{x}_\iota + e)}{\displaystyle c - d\left(\frac{b\tilde{x}_\iota - \mathcal{I}}{a \tilde{x}_\iota}\right)} \\
    & \Rightarrow \tilde{x}_{o} = \frac{(d\tilde{x}_\iota+e)(b\tilde{x}_\iota- \mathcal{I})}{ac\tilde{x}_\iota -d(b\tilde{x}_\iota - \mathcal{I})}
\end{split}
\end{equation}
Considering now the dynamics of $\dot{x}_o$:
\begin{equation}
    \dot{x}_o = 0 \Rightarrow a \tilde{x}_\sigma \tilde{x}_o - b \tilde{x}_o - l \tilde{x}_\iota \tilde{x}_o + h \tilde{x}_\tau \tilde{x}_o + j = 0
\end{equation}
As mentioned before, $j > 0 \Rightarrow \tilde{x}_o \neq 0$, and so
\begin{equation} \label{formula_xtau}
    a \tilde{x}_\sigma \tilde{x}_o - b \tilde{x}_o - l \tilde{x}_\iota \tilde{x}_o + h \tilde{x}_\tau \tilde{x}_o + j = 0 \Rightarrow \tilde{x}_\tau = \frac{b - a \tilde{x}_\sigma + l\tilde{x}_\iota}{h} - \frac{j}{h\tilde{x}_o} 
\end{equation}
Notice that, by \eqref{interleukin_function_Treg}, we have
\begin{equation} \label{simplification-b-axsigma}
    b - a\tilde{x}_\sigma = b - a\frac{b\tilde{x}_\iota - \mathcal{I}}{a \tilde{x}_\iota} = \frac{\mathcal{I}}{\tilde{x}_\iota}
\end{equation}
And therefore \eqref{formula_xtau} is reduced to
\begin{equation} \label{formula_xtau_simplified}
    \tilde{x}_\tau = \frac{\mathcal{I}}{h\tilde{x}_\iota} + \frac{l\tilde{x}_\iota}{h} - \frac{j}{h\tilde{x}_o}
\end{equation}

Now, let's analyse the dynamics of $\dot{x}_\tau$, remembering that $j > 0 \Rightarrow \tilde{x}_\tau \neq 0$:
\begin{equation} \label{computation_x_i_infinity_part_i}
\begin{split}
    \dot{x}_\tau = 0 
    & \Rightarrow - b \tilde{x}_\tau + f \frac{\tilde{x}_\tau}{\tilde{x}_\tau + g} + h \tilde{x}_\tau \tilde{x}_o + j = 0 \\
    &\Rightarrow \tilde{x}_\tau (-b + h\tilde{x}_{o}) + f \frac{\tilde{x}_\tau}{\tilde{x}_\tau + g} + j = 0 \\
    & \Rightarrow b - h\tilde{x}_{o} = \frac{f}{\tilde{x}_\tau + g} + \frac{j}{\tilde{x}_\tau} \\
    & \Rightarrow b - h\tilde{x}_{o} = \frac{(f + j)\tilde{x}_\tau + gj}{\tilde{x}_\tau(\tilde{x}_\tau + g)} \\
    & \Rightarrow \tilde{x}_{o} = \frac{b}{h} - \frac{(f + j)\tilde{x}_\tau + gj}{h\tilde{x}_\tau(\tilde{x}_\tau + g)}
\end{split}
\end{equation}
In order to simplify the computations, let's suppose $bg = f + j$. In that case, we obtain, from \eqref{computation_x_i_infinity_part_i}:
\begin{equation}\label{computation_x_i_infinity_part_ii}
\begin{aligned}
    \tilde{x}_{o} = \frac{b\tilde{x}_{\tau}^{2} - jg}{h\tilde{x}_{\tau}^{2} + gh\tilde{x}_\tau} \Rightarrow \tilde{x}_{\tau}^{2}(h\tilde{x}_{o} - b) + gh\tilde{x}_\tau \tilde{x}_o + gj = 0
\end{aligned}
\end{equation}
Now, applying \eqref{formula_xtau_simplified} to \eqref{computation_x_i_infinity_part_ii}, we get:
\begin{equation} \label{computation_x_i_infinity_part_iii}
\begin{aligned}
   & (h\tilde{x}_{o} - b) \left( \frac{\mathcal{I}}{h\tilde{x}_\iota} + \frac{l\tilde{x}_\iota}{h} - \frac{j}{h\tilde{x}_o} \right)^{2} + gh\tilde{x}_{o}\left( \frac{\mathcal{I}}{h\tilde{x}_\iota} + \frac{l\tilde{x}_\iota}{h} - \frac{j}{h\tilde{x}_o} \right) + gj = 0 \\
   & (h\tilde{x}_{o} - b) \left( \frac{\mathcal{I}\tilde{x}_{o} + l\tilde{x}_{\iota}^{2}\tilde{x}_{o} - j\tilde{x}_{\iota}}{h\tilde{x}_{\iota}\tilde{x}_{o}}\right)^{2} + g\left( \frac{\mathcal{I}\tilde{x}_{o} + l\tilde{x}_{\iota}^{2}\tilde{x}_{o}}{\tilde{x}_{\iota}}\right) = 0 \\
   & (h\tilde{x}_{o} - b)(\mathcal{I}\tilde{x}_{o} + l\tilde{x}_{\iota}^{2}\tilde{x}_{o} - j\tilde{x}_{\iota})^{2} + gh^{2}\tilde{x}_{o}^{3}\tilde{x}_{\iota}(\mathcal{I} + l\tilde{x}_{\iota}^{2}) = 0
\end{aligned}
\end{equation}
Writing $\tilde{x}_{o}$ in function of $\mathcal{I}$ and $\tilde{x}_{\iota}$ according to \eqref{formula_x_o_immune} in \eqref{computation_x_i_infinity_part_iii}, we obtain
\begin{equation} \label{computation_x_i_infinity_part_iv}
\setlength{\jot}{15pt}
\begin{split}
    & \left[ \frac{h(d\tilde{x}_\iota+e)(b\tilde{x}_\iota- \mathcal{I})}{ac\tilde{x}_\iota -d(b\tilde{x}_\iota - \mathcal{I})} - b\right] \left[ \frac{\mathcal{I}(d\tilde{x}_\iota+e)(b\tilde{x}_\iota- \mathcal{I})}{ac\tilde{x}_\iota -d(b\tilde{x}_\iota - \mathcal{I})} + \frac{l \tilde{x}_{\iota}^{2}(d\tilde{x}_\iota+e)(b\tilde{x}_\iota- \mathcal{I})}{ac\tilde{x}_\iota -d(b\tilde{x}_\iota - \mathcal{I})} -j\tilde{x}_{\iota}\right]^{2} \\
    & + gh^{2} \frac{(d\tilde{x}_\iota+e)^{3}(b\tilde{x}_\iota- \mathcal{I})^{3}}{[ac\tilde{x}_\iota -d(b\tilde{x}_\iota - \mathcal{I})]^{3}}\tilde{x}_{\iota}(\mathcal{I} + l \tilde{x}_{\iota}^{2}) = 0 \\
    & \left[ \frac{h(d\tilde{x}_\iota+e)(b\tilde{x}_\iota- \mathcal{I})- bac\tilde{x}_\iota + bd(b\tilde{x}_\iota - \mathcal{I})}{ac\tilde{x}_\iota -d(b\tilde{x}_\iota - \mathcal{I})}\right] \cdot \frac{1}{[ac\tilde{x}_\iota -d(b\tilde{x}_\iota - \mathcal{I})]^{2}} \cdot [\mathcal{I}(d\tilde{x}_\iota+e)(b\tilde{x}_\iota- \mathcal{I}) \\
    & + l \tilde{x}_{\iota}^{2}(d\tilde{x}_\iota+e)(b\tilde{x}_\iota- \mathcal{I}) - acj \tilde{x}_{\iota}^{2} + dj\tilde{x}_{\iota}(b\tilde{x}_\iota- \mathcal{I})]^{2} + gh^{2} \frac{(d\tilde{x}_\iota+e)^{3}(b\tilde{x}_\iota- \mathcal{I})^{3}}{[ac\tilde{x}_\iota -d(b\tilde{x}_\iota - \mathcal{I})]^{3}}\tilde{x}_{\iota}(\mathcal{I} + l \tilde{x}_{\iota}^{2}) = 0 \\
    & [h(d\tilde{x}_\iota+e)(b\tilde{x}_\iota- \mathcal{I})- bac\tilde{x}_\iota + bd(b\tilde{x}_\iota - \mathcal{I})][\mathcal{I}(d\tilde{x}_\iota+e)(b\tilde{x}_\iota- \mathcal{I}) + l \tilde{x}_{\iota}^{2}(d\tilde{x}_\iota+e)(b\tilde{x}_\iota- \mathcal{I}) \\
    & - acj \tilde{x}_{\iota}^{2} + dj\tilde{x}_{\iota}(b\tilde{x}_\iota- \mathcal{I})]^{2} + gh^{2}(d\tilde{x}_\iota+e)^{3}(b\tilde{x}_\iota- \mathcal{I})^{3} \tilde{x}_{\iota}(\mathcal{I} + l \tilde{x}_{\iota}^{2}) = 0 \\
    & [h(bd\tilde{x}_{\iota}^{2} -d\mathcal{I}\tilde{x}_{\iota} + eb\tilde{x}_{\iota} - e\mathcal{I}) - bac\tilde{x}_\iota + bd(b\tilde{x}_\iota - \mathcal{I})][\mathcal{I}(d\tilde{x}_\iota+e)(b\tilde{x}_\iota- \mathcal{I}) + l \tilde{x}_{\iota}^{2}(d\tilde{x}_\iota+e)(b\tilde{x}_\iota- \mathcal{I}) \\
    & - acj \tilde{x}_{\iota}^{2} + dj\tilde{x}_{\iota}(b\tilde{x}_\iota- \mathcal{I})]^{2} + gh^{2}(d\tilde{x}_\iota+e)^{3}(b\tilde{x}_\iota- \mathcal{I})^{3} \tilde{x}_{\iota}(\mathcal{I} + l \tilde{x}_{\iota}^{2}) = 0 \\
    & [h(bd\tilde{x}_{\iota}^{2} + eb\tilde{x}_{\iota} - e\mathcal{I}) -\tilde{x}_{\iota} (dh\mathcal{I} + bac) + bd(b\tilde{x}_\iota - \mathcal{I})][\mathcal{I}(d\tilde{x}_\iota+e)(b\tilde{x}_\iota- \mathcal{I}) + l \tilde{x}_{\iota}^{2}(d\tilde{x}_\iota+e)(b\tilde{x}_\iota- \mathcal{I}) \\
    & - acj \tilde{x}_{\iota}^{2} + dj\tilde{x}_{\iota}(b\tilde{x}_\iota- \mathcal{I})]^{2} + gh^{2}(d\tilde{x}_\iota+e)^{3}(b\tilde{x}_\iota- \mathcal{I})^{3} \tilde{x}_{\iota}(\mathcal{I} + l \tilde{x}_{\iota}^{2}) = 0
\end{split}
\end{equation}
Thus $\displaystyle \lim_{\mathcal{I} \rightarrow +\infty} dh\mathcal{I} + bac = dh\mathcal{I}$. Therefore, when we take the limit $\mathcal{I} \rightarrow + \infty$, the polynomial equation described on \eqref{computation_x_i_infinity_part_iv} has the same solutions as
\begin{equation} \label{computation_x_i_infinity_part_v}
\setlength{\jot}{15pt}
\begin{split}
    & [h(bd\tilde{x}_{\iota}^{2} + eb\tilde{x}_{\iota} - e\mathcal{I}) -\tilde{x}_{\iota} (dh\mathcal{I}) + bd(b\tilde{x}_\iota - \mathcal{I})][\mathcal{I}(d\tilde{x}_\iota+e)(b\tilde{x}_\iota- \mathcal{I}) \\
    & + l\tilde{x}_{\iota}^{2}(d\tilde{x}_\iota+e)(b\tilde{x}_\iota- \mathcal{I}) - acj \tilde{x}_{\iota}^{2} + dj\tilde{x}_{\iota}(b\tilde{x}_\iota- \mathcal{I})]^{2} \\
    & + gh^{2}(d\tilde{x}_\iota+e)^{3}(b\tilde{x}_\iota- \mathcal{I})^{3} \tilde{x}_{\iota}(\mathcal{I} + l \tilde{x}_{\iota}^{2}) = 0 \\
    & [h(d\tilde{x}_\iota+e)(b\tilde{x}_\iota- \mathcal{I}) + bd(b\tilde{x}_\iota - \mathcal{I})][\mathcal{I}(d\tilde{x}_\iota+e)(b\tilde{x}_\iota- \mathcal{I}) + l \tilde{x}_{\iota}^{2}(d\tilde{x}_\iota+e)(b\tilde{x}_\iota- \mathcal{I}) \\
    & - acj \tilde{x}_{\iota}^{2} + dj\tilde{x}_{\iota}(b\tilde{x}_\iota- \mathcal{I})]^{2} + gh^{2}(d\tilde{x}_\iota+e)^{3}(b\tilde{x}_\iota- \mathcal{I})^{3} \tilde{x}_{\iota}(\mathcal{I} + l \tilde{x}_{\iota}^{2}) = 0 \\
    & (b\tilde{x}_\iota- \mathcal{I}) \{[h(d\tilde{x}_\iota+e) + bd][\mathcal{I}(d\tilde{x}_\iota+e)(b\tilde{x}_\iota- \mathcal{I}) + l \tilde{x}_{\iota}^{2}(d\tilde{x}_\iota+e)(b\tilde{x}_\iota- \mathcal{I}) - acj \tilde{x}_{\iota}^{2} \\
    & + dj\tilde{x}_{\iota}(b\tilde{x}_\iota- \mathcal{I})]^{2} 
    + gh^{2}(d\tilde{x}_\iota+e)^{3}(b\tilde{x}_\iota- \mathcal{I})^{2} \tilde{x}_{\iota}(\mathcal{I} + l \tilde{x}_{\iota}^{2})\} = 0
\end{split}
\end{equation}
Therefore, we got one of the roots of \eqref{computation_x_i_infinity_part_v}
\begin{equation} \label{value_x_iota_immune_infinity}
    \lim_{\mathcal{I} \rightarrow +\infty} b\tilde{x}_\iota- \mathcal{I} = 0 \Rightarrow \lim_{\mathcal{I} \rightarrow +\infty} \tilde{x}_\iota = +\infty
\end{equation}

Applying the result of \eqref{value_x_iota_immune_infinity} to \eqref{interleukin_function_Treg} and \eqref{formula_x_o_immune}, we obtain
\begin{equation} \label{computation_x_sigma_x_o_infinity_immune}
\setlength{\jot}{15pt}
\begin{aligned}
    & \lim_{\mathcal{I} \rightarrow +\infty} \tilde{x}_\sigma = \lim_{\mathcal{I} \rightarrow +\infty} \frac{b\tilde{x}_\iota - \mathcal{I}}{a \tilde{x}_\iota} \Rightarrow \lim_{\mathcal{I} \rightarrow +\infty} \tilde{x}_\sigma = 0 \\
    & \lim_{\mathcal{I} \rightarrow +\infty} \tilde{x}_o = \lim_{\mathcal{I} \rightarrow +\infty} \frac{(d\tilde{x}_\iota+e)(b\tilde{x}_\iota- \mathcal{I})}{ac\tilde{x}_\iota -d(b\tilde{x}_\iota - \mathcal{I})} = \lim_{\mathcal{I} \rightarrow +\infty} \frac{d\tilde{x}_\iota(b\tilde{x}_\iota- \mathcal{I})}{ac\tilde{x}_\iota} \Rightarrow \lim_{\mathcal{I} \rightarrow +\infty} \tilde{x}_o = 0
\end{aligned}
\end{equation}
Let us determine the value of $\tilde{x}_\tau$ at that equilibrium point. First, notice that $\displaystyle \lim_{\mathcal{I} \rightarrow +\infty} \neq \pm \infty$. In fact, suppose that $\displaystyle \lim_{\mathcal{I} \rightarrow +\infty} \tilde{x}_\tau = \pm \infty$ and consider the dynamics of $\dot{x}_\tau$
\begin{equation}
\begin{aligned}
    & \lim_{\mathcal{I} \rightarrow + \infty} \dot{x}_{\tau} = \lim_{\mathcal{I} \rightarrow + \infty} - b \tilde{x}_\tau + f \frac{\tilde{x}_\tau}{\tilde{x}_\tau + g} + h \tilde{x}_\tau \tilde{x}_o + j = \lim_{\mathcal{I} \rightarrow + \infty} \tilde{x}_\tau (- b + h \tilde{x}_o) + f \frac{\tilde{x}_\tau}{\tilde{x}_\tau + g} + j \\
    & = \lim_{\mathcal{I} \rightarrow + \infty} -b \tilde{x}_\tau + f + j = \lim_{\mathcal{I} \rightarrow + \infty} -b \tilde{x}_\tau = \pm \infty
\end{aligned}
\end{equation}
which is a contradiction since for all $\mathcal{I} \in \mathbb{R}^{*}_{+}$, at equilibrium, we have $\dot{x}_\tau = 0$. 

Therefore, $\tilde{x}_\tau$ must be limited when $\mathcal{I} \rightarrow +\infty \Rightarrow \displaystyle \lim_{\mathcal{I} \rightarrow + \infty} \tilde{x}_\tau \tilde{x}_o = 0$. Calling $\displaystyle \lim_{\mathcal{I} \rightarrow + \infty} \tilde{x}_\tau = \alpha$ and analysing again the dynamics of $\dot{x}_\tau$, we get
\begin{equation} \label{computation_limit_x_tau_immune_part_i}
\begin{split}
    - b \alpha + \frac{f \alpha}{\alpha + g} + j = 0 
    & \Rightarrow -b\alpha^{2} -bg\alpha +f\alpha +j\alpha +gj = 0 \\ 
    & \Rightarrow -b\alpha^{2} + \alpha (-bg + f + j) + gj = 0
\end{split}
\end{equation}
As we hypothesized before that $f + j = bg$, then \eqref{computation_limit_x_tau_immune_part_i} is reduced to
\begin{equation} \label{computation_limit_x_tau_immune_part_ii}
\begin{aligned}
    & -b\alpha^{2} + gj = 0 \Rightarrow \lim_{\mathcal{I} \rightarrow + \infty} \tilde{x}_\tau = \sqrt{\frac{gj}{b}}
\end{aligned}
\end{equation}

Let us now analyse if this equilibrium is linearly stable. For this purpose, we must study the behaviour of the Jacobian $J$ of the system when $\mathcal{I} \rightarrow +\infty$
\begin{equation} \label{behaviour_jacobian_infinity_immune_part_i}
\setlength{\jot}{15pt}
\begin{aligned}
& \lim_{\mathcal{I} \rightarrow +\infty} J = \lim_{\mathcal{I} \rightarrow +\infty} \begin{pmatrix}
  f_{\iota, \iota} & f_{\iota, \sigma} & 0 & 0 \\
  f_{\sigma, \iota} & f_{\sigma, \sigma} & 0 & f_{\sigma, o} \\
  0 & 0 & f_{\tau, \tau} & f_{\tau, o} \\
  f_{o, \iota} & f_{o, \sigma} & f_{o, \tau}  & f_{o, \tau}
  \end{pmatrix} \\
  & = \lim_{\mathcal{I} \rightarrow +\infty} \begin{pmatrix}
  a\tilde{x}_\sigma - b & a\tilde{x}_\iota & 0 & 0 \\
  -d\tilde{x}_\sigma & -d(\tilde{x}_o + \tilde{x}_\iota)-e & 0 & c - d\tilde{x}_\sigma \\
  0 & 0 & -b + \displaystyle \frac{fg}{(\tilde{x}_{\tau} + g)^{2}} + h\tilde{x}_o & h\tilde{x}_\tau \\
  -l\tilde{x}_o & a\tilde{x}_o & h\tilde{x}_o  & a\tilde{x}_\sigma - b -l\tilde{x}_\iota + h\tilde{x}_\tau
  \end{pmatrix}
\end{aligned}
\end{equation}
Applying the limits previously determined, we obtain
\begin{equation} \label{behaviour_jacobian_infinity_immune_part_ii}
\setlength{\jot}{15pt}
\begin{aligned}
& \lim_{\mathcal{I} \rightarrow +\infty} J = \lim_{\mathcal{I} \rightarrow +\infty} \begin{pmatrix}
  - b & a\tilde{x}_\iota & 0 & 0 \\
  0 & -d\tilde{x}_\iota & 0 & c \\
  0 & 0 & \displaystyle \frac{bf - b^{2}g - bj -2b\sqrt{bgj}}{gb + j + 2\sqrt{bgj}} & \displaystyle h\sqrt{\frac{gj}{b}} \\
  0 & 0 & 0  & -l\tilde{x}_\iota
  \end{pmatrix}
\end{aligned}
\end{equation}
Therefore, the eigenvalues of $\displaystyle \lim_{\mathcal{I} \rightarrow +\infty} J$ are $-b$, $\displaystyle \lim_{\mathcal{I} \rightarrow +\infty} -d\tilde{x}_\iota$, $\displaystyle \frac{bf - b^{2}g - bj -2b\sqrt{bgj}}{gb + j + 2\sqrt{bgj}}$ and $\displaystyle \lim_{\mathcal{I} \rightarrow +\infty} -l\tilde{x}_\iota$.
By hypothesis
\begin{equation}
f + j = bg \Rightarrow f < bg \Rightarrow bf < b^{2}g \Rightarrow \frac{bf - b^{2}g - bj -2b\sqrt{bgj}}{gb + j + 2\sqrt{bgj}} < 0
\end{equation}
We conclude that all the eigenvalues of $\displaystyle \lim_{\mathcal{I} \rightarrow +\infty} J$ have negative real part, i.e., this equilibrium is linearly stable. Moreover, looking at \eqref{behaviour_jacobian_infinity_immune_part_ii}, we also conclude that
\begin{equation} \label{behaviour_jacobian_infinity_immune_part_iii}
    \lim_{\mathcal{I} \rightarrow +\infty} \det J = + \infty
\end{equation}

Let us consider the homeostasis matrix $H$. As shown before
\begin{equation} \label{not_asymptotic_null_degradation}
    \lim_{\mathcal{I} \rightarrow +\infty} f_{\tau, \tau} = \frac{bf - b^{2}g - bj -2b\sqrt{bgj}}{gb + j + 2\sqrt{bgj}} < 0
\end{equation}
i.e., the system does not present asymptomatic null-degradation homeostasis. Furthermore, looking to the dynamics of $\dot{x}_o$ and considering that for all $\mathcal{I} \in \mathbb{R}^{*}_{+}$, at equilibrium we have $\dot{x}_o = 0$ and so
\begin{equation} \label{limit_product_xiota_xo_infinity_immune}
\begin{split}
     \lim_{\mathcal{I} \rightarrow +\infty} \dot{x}_o = 0 
     &\Rightarrow \lim_{\mathcal{I} \rightarrow +\infty} a \tilde{x}_\sigma \tilde{x}_o - b \tilde{x}_o - l \tilde{x}_\iota \tilde{x}_o + h \tilde{x}_\tau \tilde{x}_o + j = 0 \\ 
     & \Rightarrow \lim_{\mathcal{I} \rightarrow +\infty} - l \tilde{x}_\iota \tilde{x}_o + j = 0 \\
     & \Rightarrow \lim_{\mathcal{I} \rightarrow +\infty}  \tilde{x}_\iota \tilde{x}_o = \frac{j}{l}
\end{split}
\end{equation}

Now we may verify that the system does not exhibit asymptotic structural homeostasis. In fact, applying \eqref{structural_homeostasis_immune} and \eqref{limit_product_xiota_xo_infinity_immune}, we get
\begin{equation} \label{not_asymptotic_structural}
\begin{aligned}
    & \lim_{\mathcal{I} \rightarrow +\infty} f_{o, \sigma}f_{\sigma, \iota} - f_{o, \iota}f_{\sigma, \sigma} = \lim_{\mathcal{I} \rightarrow +\infty} - \tilde{x}_{o} [ae\tilde{x}_\sigma + le\tilde{x}_\iota + le\tilde{x}_o + lf] = -ej \neq 0
\end{aligned}
\end{equation}
i.e., the system does not present asymptotic structural homeostasis. Let's now verify if the system presents asymptotic homeostasis. In fact, by \eqref{not_asymptotic_null_degradation} and \eqref{not_asymptotic_structural}, we conclude that
\[
    \lim_{\mathcal{I} \rightarrow +\infty} \det H = -ej \left( \frac{bf - b^{2}g - bj -2b\sqrt{bgj}}{gb + j + 2\sqrt{bgj}}\right) = ej \left( \frac{b^{2}g + bj + 2b\sqrt{bgj} - bf}{gb + j + 2\sqrt{bgj}}\right) > 0
\]
Observe that $ \displaystyle \lim_{\mathcal{I} \rightarrow +\infty} \det H$ is a finite positive real number. Applying now \eqref{behaviour_jacobian_infinity_immune_part_iii} and the Cramer's Rule, we conclude that
\begin{equation}
    \lim_{\mathcal{I} \rightarrow +\infty} \frac{d\tilde{x}_o}{d\mathcal{I}}(\mathcal{I}) = \lim_{\mathcal{I} \rightarrow +\infty} - \frac{\det H}{\det J} = 0
\end{equation}

Therefore, despite the fact that the system does not present neither asymptotic null degradation or asymptotic structural homeostasis, it still exhibits asymptotic homeostasis.

\section{Intracellular Copper Regulation}

\subsection{Brief Review of Copper Regulation}

Copper is an inorganic element essential to many physiological process, including neurotransmission, gastrointestinal uptake, lactation, transport to the developing brain  and growth. However, its concentration must be tightly regulated, as intracellular copper excess is associated to cellular damage and protein folding disorders \cite{lutsenko07b, kaplan16}.

In addition to cytosolic copper concentration, copper in intramitochondrial space must be also strictly regulated, as it is paramount for the function of copper dependent enzymes, but it may cause oxidative stress in excessive levels \cite{baker17}.

Copper in the external medium enters the cell by CTR1. In the cytosol, copper is rapidly incorporated to glutatione, from where it is ligated to metallochaperones, as ATOX1, CCS and COX17. ATOX1 is associated to the copper secretory pathway, while CCS and COX17 are enrolled in incorporating copper in the mitochondrial enzymes SOD1 and COX \cite{lutsenko07b, kaplan16}.

The ATOX1 protein takes the cytosolic copper to the Cu-ATPases ATP7A and ATP7B, which use ATP to pump copper ions to vesicles of the trans-Golgi network, where copper will be incorporated in Cu-dependent enzymes and secreted. This is called the secretory pathway and it is responsible for decreasing the cytosolic copper concentration. However, when cytosolic copper levels are low, ATP7A and ATP7B take copper from the trans-Golgi network and give it to ATOX1, leading to an increase on the cytosolic copper concentration~\cite{yu17}.

The functions governed by copper homeostasis are primarily executed by the copper-transporting ATPases known as ATP7A and ATP7B. ATP7A is a transmembrane protein located throughout the body, except for the liver, with two essential roles in copper homeostasis: transporting copper across cell membranes in both directions (regulating absorption of copper only in the small intestines, and excreting excessive intracellular copper, in all tissues) aiming therefore at the maintenance of intracellular copper concentrations (both cytosolic and mitochondrial); and participating as a cofactor in the activating mechanisms of copper-dependant enzymes, critical for the structure and function of bone, skin, hair, blood vessels, and the nervous system \cite{vosko02, scheiber13}. On the other hand, the ATP7B transmembrane protein is located primarily in liver cells, but also in the brain, and bears similar tasks: regulating intracellular copper concentrations by releasing copper into bile and plasma, and co-activating copper-dependant enzymes in the Golgi apparatus \cite{lutsenko07b, scheiber13}. 

Expanding briefly on the physiological implications of defective copper regulation, anomalies in the ATP7B gene generate a sole disorder known as Wilson disease (WD), in which dysfunctional ATP7B proteins implicate WD carriers to accumulate abnormal levels of copper in the liver and in the brain. As a result, clinical features comprise neurological, hepatic, psychiatric and skeletal abnormalities, as well as renal tubular dysfunction and hemolytic anemia. The prognosis in WD is generally favorable given that current therapeutic approaches prevent or attenuate most of the symptoms. Its chronic nature, however, implies that treatment interruption results in potentially fatal liver damage \cite{kaler08}. 

Differently, variations in the ATP7A gene result in dysfunctional ATP7A proteins that cause three separate illnesses: Menkes disease, a severe early-onset neurodegenerative condition in which carriers usually die by 3 years of age \cite{kaler94a}; occipital horn syndrome, a connective disorder with typical skeleton deformations which is also clinically resembling to Menkes disease, while less aggressive in its neurological manifestation \cite{kaler94b}; and a recently found distal motor neuropathy, marked by frequent onset at adulthood and with no apparent signs of copper metabolic abnormalities, although still poorly studied \cite{kennerson10, yi12}.

\begin{figure}[!ht]
\centering
\includegraphics[scale = 0.45,trim=2cm 1.5cm 2cm 0cm,clip=true]{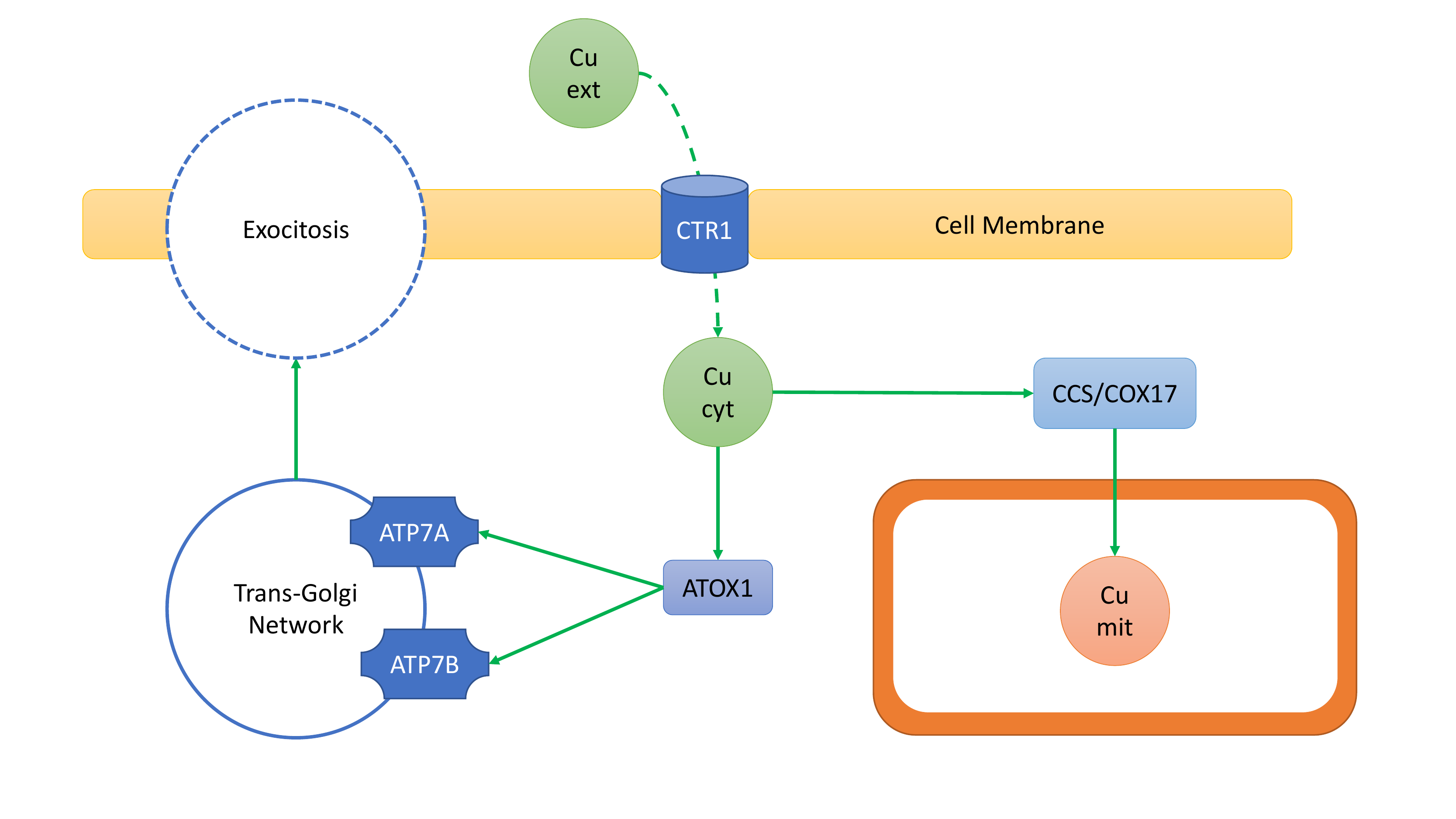}
\caption{\label{copper_simple_model}
Simplified model of intracellular copper regulation. 
Here, Cu ext: extracellular copper; Cu cyt: cytosolic copper; Cu mit: mitochondrial copper.}
\end{figure}

\subsection{Mathematical Model}

A simplified version of the intracellular copper regulation mechanism described above can be obtained by considering the concentration of copper in three environments: extracellular copper (Cu ext:), cytosolic copper (Cu cyt), mitochondrial copper (Cu mit).  
The dynamics of copper concentration on these environments is governed by its interaction with three metallochaperones: ATOX1, CCS and COX17. 
This interaction dynamics is represented by the diagram of Figure~\ref{copper_simple_model}.

\begin{figure}[!ht]
\centering
\begin{tabular}{c@{}c}
\includegraphics[scale = 0.3,trim=5cm 3cm 4cm 3cm,clip=true]{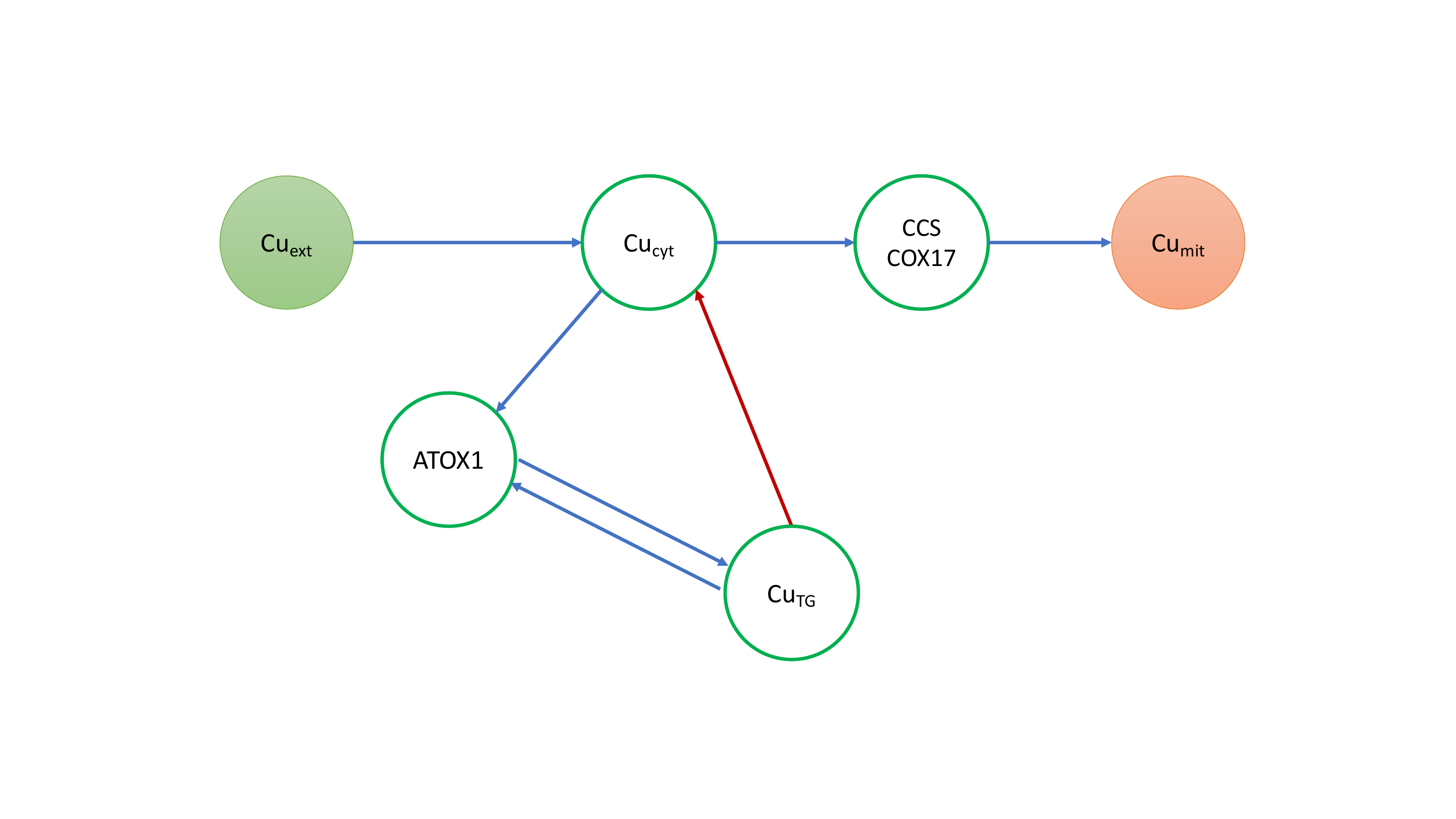} &
\includegraphics[scale = 0.3,trim=5cm 3cm 5cm 3cm,clip=true]{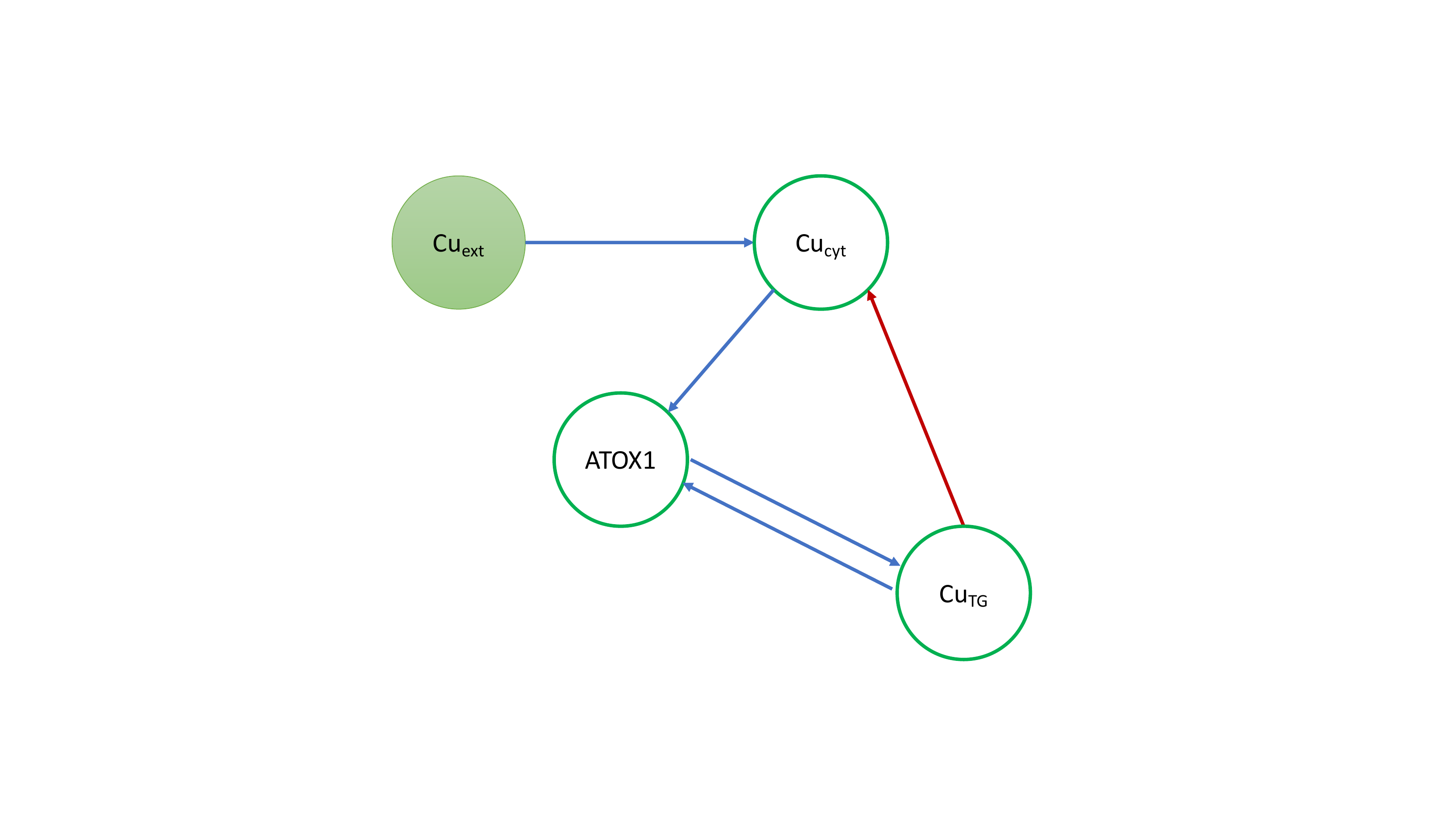} \\
(A) & (B)
\end{tabular}
\caption{\label{fig_copper_network}
Input-output network for the intracellular copper regulation model. 
Blue arrows indicate positive stimulus (activation) and red arrows indicate negative stimulus (inhibition). (A) Full network. (B) Core network.
}
\end{figure}

We can abstract this model by the inout-output network shown in Figure \ref{fig_copper_network}(A). Here the extracellular copper concentration $[\text{Cu}_{\text{ext}}]$ is the input node and mitochondrial copper concentration $[\text{Cu}_{\text{mit}}]$ is the output node.
The input parameter $\mathcal{I}$ represents the abundance of extracellular copper.
As observed before, in order to verify that $[\text{Cu}_{\text{mit}}]$ is homeostatic, it is enough to verify that $[\text{Cu}_{\text{cyt}}]$ is homeostatic. 
Hence, we can further simplify the input-output network of Figure \ref{fig_copper_network}(A) to its core network shown in Figure \ref{fig_copper_network}(B). To facilitate notation, let's represent the concentrations of $\text{Cu}_{\text{ext}}$, $\text{Cu}_{\text{cyt}}$, $\text{ATOX1}$ and $\text{Cu}_{\text{TG}}$, respectively, as $x_{\iota}$, $x_{o}$, $x_{\tau}$ and $x_{\rho}$. Then the dynamical system associated to the network in Figure \ref{fig_copper_network}(B) becomes
\begin{equation} \label{simplified_eqs_newvar}
\begin{aligned}
& \dot{x}_\iota = \mathcal{I} - k_0 x_\iota \\
& \dot{x}_\tau = f k_1 x_o - k_3x_\tau - w_2 \frac{x_\tau (x_\rho - x_\tau)} {1 + x_\tau} \\
& \dot{x}_\rho = g k_3 x_\tau + w_2 \frac{x_\tau (x_\rho - x_\tau)}{1 + x_\tau} - k_4 x_\rho \\
& \dot{x}_o = \frac{k_0}{N} x_\iota - k_1 x_o (1 + w_1 x_o) + k_2 G(x_\rho)
\end{aligned}
\end{equation}

\noindent
Here, the constants $N$, $w_1$, $w_2$, $k_0$, $k_1$, $k_2$, $k_3$, $k_4$ are positive parameters, $f,g \in (0,1]$ and $G$ and $H$ are quadratic Hill Functions (for $x\geq 0$):
\begin{equation}
 G(x)=\frac{1}{1+x^2}-1 \qquad\text{and}\qquad
 H(x)=\frac{x}{1+x}
\end{equation}
Notice that this system is represented by the abstract network shown in Figure \ref{four_nodes_example_feedback}(A).

\ignore{
The ODE system representing intracellular copper regulation associated to the core network is
\begin{equation} \label{copper_equations}
\begin{aligned}
    & \frac{d[\text{Cu}_{\text{ext}}]}{dt} = \mathcal{I} - k [\text{Cu}_{\text{ext}}]\\
    & \frac{d[\text{Cu}_{\text{cyt}}]}{dt} = V_{\text{CTR1}}([\text{Cu}_{\text{ext}}]) - V_{\text{MC}}([\text{Cu}_{\text{cyt}}]) + V_{\text{EXO}}([\text{Cu}_{\text{TG}}]) \\
    & \frac{d[\text{ATOX1}]}{dt} =  V_{\text{ATOX1}}([\text{ATOX1}],[\text{Cu}_{\text{TG}}]) 
    + f V_{\text{MC}}( [\text{Cu}_{\text{cyt}}])\\
    & \frac{d[\text{Cu}_{\text{TG}}]}{dt} = V_{\text{TG}}([\text{ATOX1}],[\text{Cu}_{\text{TG}}]) 
\end{aligned} 
\end{equation}
Where $k$ is a kinetic constant, $0<f<1$ and $V_{\text{name}}$ are smooth functions.
Here the input variable is $[\text{Cu}_{\text{ext}}]$, the output variable is $[\text{Cu}_{\text{mit}}]$ and the input parameter is $\mathcal{I}$.
We will consider the following explicit forms for the functions 
\begin{equation} \label{copper_kinetics_description}
\begin{aligned}
  V_{\text{CTR1}}([\text{Cu}_{\text{ext}}]) & = \frac{k_0}{N} [\text{Cu}_{\text{ext}}] \\
  V_{\text{MC}}([\text{Cu}_{\text{cyt}}]) & = k_1 [\text{Cu}_{\text{cyt}}](1 + w_1 \, [\text{Cu}_{\text{cyt}}]) \\
  V_{\text{EXO}}([\text{Cu}_{\text{TG}}]) & = k_2 G([\text{Cu}_{\text{TG}}]) \\
  V_{\text{ATOX1}}([\text{ATOX1}], [\text{Cu}_{\text{TG}}]) & = - k_3 [\text{ATOX1}] \\
  & - w_2 H([\text{ATOX1}])\,\big([\text{Cu}_{\text{TG}}] - [\text{ATOX1}]\big)  \\
  V_{\text{TG}}([\text{ATOX1}], [\text{Cu}_{\text{TG}}]) & = g \, k_3 \, [\text{ATOX1}] \\ 
  & + w_2 H([\text{ATOX1}])\,\big([\text{Cu}_{\text{TG}}]-[\text{ATOX1}]\big) \\
  &- k_4 [\text{Cu}_\text{TG}] 
\end{aligned}
\end{equation}
where $N$, $w_1$, $w_2$, $k_0$, $k_1$, $k_2$, $k_3$, $k_4$ are positive parameters, $f,g \in (0,1]$ and $G$ and $H$ are quadratic Hill Functions (for $x\geq 0$):
\begin{equation}
 G(x)=\frac{1}{1+x^2}-1 \qquad\text{and}\qquad
 H(x)=\frac{x}{1+x}
\end{equation}
}

\subsection{Infinitesimal Homeostasis}

The jacobian matrix $J$ of \eqref{simplified_eqs_newvar} at an equilibrium point is 
\begin{equation}
\begin{aligned}
    J & = \begin{pmatrix}
    f_{\iota, x_{\iota}} & 0 & 0 & 0 \\
    0 & f_{\tau, x_{\tau}} & f_{\tau, x_{\rho}} & f_{\tau, x_{o}} \\
    0 & f_{\rho, x_{\tau}} & f_{\rho, x_{\rho}} & 0 \\
    f_{o, x_{\iota}} & 0 & f_{o, x_{\rho}}  & f_{o, x_{o}}
    \end{pmatrix} 
\end{aligned}
\end{equation}    
    
\begin{equation}
\begin{aligned}
    J & = \begin{pmatrix}
    -k_0 & 0 & 0 & 0 \\
    0 & - k_3 + w_2 \frac{x_\tau^2 + 2 x_\tau - x_\rho}{(1+x_\tau)^2} & - w_2 \frac{x_\tau}{1+x_\tau} & f k_1 \\
    0 & g k_3 - w_2 \frac{x_\tau^2 + 2 x_\tau - x_\rho}{(1+x_\tau)^2} & - k_4 + w_2 \frac{x_\tau}{1+x_\tau} & 0  \\
    \frac{k_0}{N} & 0 & k_2 G_{x_{\rho}}(x_\rho) & -k_1(1+2 w_1 x_o)
    \end{pmatrix}
\end{aligned}
\end{equation}

Note that, for $\mathcal{I}=0$, the point $(0,0,0,0)$ is a solution and the jacobian at $(0,0,0,0)$ is (recall that $G_{x_{\rho}}(0)=0$)
\begin{equation}
  J = \begin{pmatrix}
  -k_0 & 0 & 0 & 0 \\
  0 & -k_3 & 0 & fk_{1} \\
  0 & g k_3 & - k_4 & 0 \\
  \frac{k_0}{N} & 0 & 0  & - k_1
  \end{pmatrix}
\end{equation}
and so $(0,0,0,0)$ is always stable.

On the other hand, analysing the abstract network shown in Figure \ref{four_nodes_example_feedback}, we conclude that:

\begin{equation} \label{det_H_copper_model}
\begin{aligned}
    \det (H) & = \begin{vmatrix}
    f_{\iota, x_{\iota}} & 0 & 0 & -1 \\
    0 & f_{\tau, x_{\tau}} & f_{\tau, x_{\rho}} & 0 \\
    0 & f_{\rho, x_{\tau}} & f_{\rho, x_{\rho}} & 0 \\
    f_{o, x_{\iota}} & 0 & f_{o, x_{\rho}}  & 0
    \end{vmatrix} \Rightarrow \det (H) = f_{o, x_{\iota}} \cdot (f_{\tau, x_{\tau}}f_{\rho, x_{\rho}} - f_{\tau, x_{\rho}}f_{\rho, x_{\tau}}) \\
 \Rightarrow \det(H) & =\frac{k_0}{N} \left(k_4 \left(k_3+w_2\frac{x_\rho-x_\tau^2-2 x_\tau}{(1+x_\tau)^2}\right) + k_3 w_2 (g-1) \left(\frac{x_\tau}{1+x_\tau} \right)\right)
\end{aligned}
\end{equation}
For $\mathcal{I}=0$ we have that
\begin{equation}
 \det(H)=\frac{k_0 k_3 k_4}{N} \neq 0
\end{equation}

Moreover, by equation \eqref{det_H_copper_model}, the abstract network supports Haldane and Appendage homeostasis. However, regarding the intracellular copper regulation system, by equation \eqref{det_H_copper_model}, we have:

\begin{equation}
    f_{o, x_{\iota}} = \frac{k_{0}}{N} \neq 0
\end{equation}
and therefore if the system exhibits homeostasis, it exhibits appendage homeostasis. In the graph below we show a simulation of this system in XPP which exhibits homeostasis.

\begin{figure}[H]
\centering
\includegraphics[scale = 0.35]{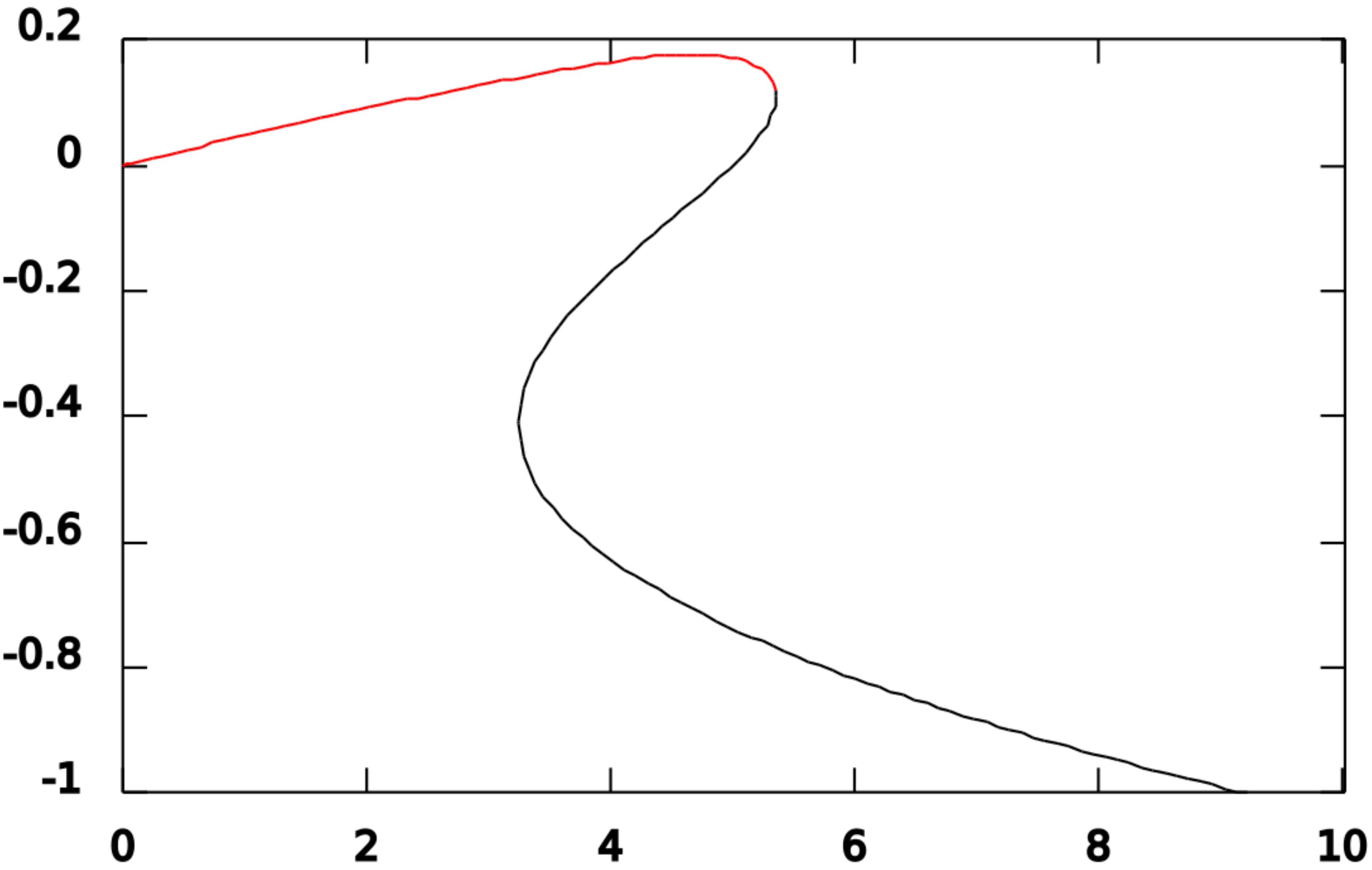}
\renewcommand{\figurename}{Figure}
\caption{Figure generated by \textsc{Xpp-Auto} for the input-output map $x_0$ ($y$-axis) as function of $\mathcal{I}$ ($x$-axis), named $J$ in the picture. In this case the point of infinitesimal homeostasis is around $\mathcal{I}_0=4.7$.The red line indicated that the equilibrium is stable and the black line indicates that the equilibrium is unstable; the exchange of stability occurs around $\mathcal{I}=5.2$. The parameter values are: $N=10$, $f=0.5$, $g=0.05$, $w_1=1$, $w_2=0.5$ , $k_0=10$, $k_1=2$, $k_2=1$, $k_3=0.5$, $k_4=1$.}
\label{fig_diagram}
\end{figure}

From a biological perspective, the classification of homeostasis as appendage homeostasis may provide useful information about the studied system, as we shall see in the following subsections.

\subsection{Normal Form of the Input-Output Function}

Another important qualitative feature of the system is the normal form of the input-output function around the homeostasis point, i.e., if the system supports chair homeostasis for some choice of parameters or not. This is important because, as noted by Golubitsky et al. \cite{gs17}, simple homeostasis is qualitatively different from chair homeostasis.

The graph shown in Figure \ref{fig_diagram} suggests that for the simulated set of parameters the system presented simple homeostasis. However, it is important to analytically study this question, as parameters in biological systems are hard to determine and may present great variations among individuals.

We shall than apply the fact that the system exhibits appendage homeostasis to simplify the computation of $\displaystyle \frac{d^{2} \tilde{x}_o}{d \mathcal{I}^2}$. Firstly, let's represent the equilibrium points of the system as $(\tilde{x}_{\iota}, \tilde{x}_{\tau}, \tilde{x}_{\rho}, \tilde{x}_{o})$.

Remember that, according to equation \eqref{det_H_copper_model}, the determinant of the homeostasis matrix of the corresponding abstract network is:

\begin{equation*}
    \det H = f_{o, x_{\iota}} \cdot (f_{\tau, x_{\tau}}f_{\rho, x_{\rho}} - f_{\tau, x_{\rho}}f_{\rho, x_{\tau}})
\end{equation*}

Considering that the system exhibits appendage homeostasis, as noted by Golubtisky et al., to determine the normal form of the input-output function around the homeostasis point we may evaluate the derivative of the appendage sub-network as the system presents appendage homeostasis. Denominating $\det H_{1} = f_{\tau, x_{\tau}}f_{\rho, x_{\rho}} - f_{\tau, x_{\rho}}f_{\rho, x_{\tau}}$, we must evaluate $\displaystyle \frac{d \det H_{1}}{d \mathcal{I}}$. By the chain rule, we got:

\begin{equation} \label{derivada_determinante_subrede_apendicular_primeira_parte}
    \frac{d \det H_{1}}{d \mathcal{I}} = \frac{\partial \det H_{1}}{\partial \mathcal{I}} + \frac{\partial \det H_{1}}{\partial \tilde{x}_{\iota}} \cdot \frac{d \tilde{x}_\iota}{d \mathcal{I}} + \frac{\partial \det H_{1}}{\partial \tilde{x}_{\tau}} \cdot \frac{d \tilde{x}_\tau}{d \mathcal{I}} + \frac{\partial \det H_{1}}{\partial \tilde{x}_{\rho}} \cdot \frac{d \tilde{x}_\rho}{d \mathcal{I}} + \frac{\partial \det H_{1}}{\partial \tilde{x}_{o}} \cdot \frac{d \tilde{x}_o}{d \mathcal{I}}
\end{equation}

As we are evaluating this at the homeostasis point, than $\displaystyle \frac{d \tilde{x}_o}{d \mathcal{I}} = 0$. Furthermore, the expression of $\det H_{1}$ does not explicitly depend on $\mathcal{I}$ or $\tilde{x}_{\iota}$. Therefore, we may simplify \eqref{derivada_determinante_subrede_apendicular_primeira_parte}, obtaining:

\begin{equation} \label{general_formula_derivative_appendage}
    \frac{d \det H_{1}}{d \mathcal{I}} = \frac{\partial \det H_{1}}{\partial \tilde{x}_{\tau}} \cdot \frac{d \tilde{x}_\tau}{d \mathcal{I}} + \frac{\partial \det H_{1}}{\partial \tilde{x}_{\rho}} \cdot \frac{d \tilde{x}_\rho}{d \mathcal{I}}
\end{equation}

Now we can use the explicit formula for $\det H_{1}$ used in \eqref{det_H_copper_model}:

\begin{equation*}
    \det H_{1} = k_4 \left(k_3 - w_2\frac{\tilde{x}_\tau^2 + 2 \tilde{x}_\tau - \tilde{x}_\rho}{(1+\tilde{x}_\tau)^2}\right) + k_3 w_2 (g-1) \left(\frac{\tilde{x}_\tau}{1+\tilde{x}_\tau} \right) 
\end{equation*}
to compute the partial derivatives:

\begin{equation} \label{explicit_partial_derivatives_appendage_block}
\begin{aligned}
    & \frac{\partial \det H_{1}}{\partial \tilde{x}_{\tau}} = \frac{(g-1)k_{3}w_{2}}{(1+\tilde{x}_\tau)^2} - \frac{2k_{4}w_{2}(\tilde{x}_\rho + 1)}{(1+\tilde{x}_\tau)^3} \\
    & \frac{\partial \det H_{1}}{\partial \tilde{x}_{\rho}} =\frac{k_4 w_{2}}{(1+\tilde{x}_\tau)^2}
\end{aligned}
\end{equation}

We must now compute $\displaystyle \frac{d \tilde{x}_\tau}{d \mathcal{I}}$ and $\displaystyle \frac{d \tilde{x}_\rho}{d \mathcal{I}}$. In order to perform this, we shall use a strategy analogous to the one used to obtain the homeostasis matrix. In fact, remember that, as shown by Golubitsky et al. \cite{gs17}, considering $J$ as the Jacobian at the homeostasis point and that $f_{\iota, \mathcal{I}} = 1$, than the following linear system is satisfied:

\begin{equation} \label{cramer_segunda_derivada_primeira_parte}
    J \begin{pmatrix}[1.5] \displaystyle \frac{d \tilde{x}_\iota}{d \mathcal{I}} \\ \displaystyle \frac{d \tilde{x}_\tau}{d \mathcal{I}} \\ \displaystyle \frac{d \tilde{x}_\rho}{d \mathcal{I}} \\ 
    \displaystyle \frac{d \tilde{x}_o}{d \mathcal{I}}
    \end{pmatrix} = \begin{pmatrix}
    - 1 \\ 0 \\ 0 \\ 0
    \end{pmatrix}
\end{equation}

As the equilibrium must be linearly stable, than $\det J \neq 0$, and therefore we may apply Cramer's rule to compute $\displaystyle \frac{d \tilde{x}_\tau}{d \mathcal{I}}$ and $\displaystyle \frac{d \tilde{x}_\rho}{d \mathcal{I}}$. Therefore, we can write:

\begin{equation} \label{cramer_segunda_derivada_segunda_parte}
    \frac{d \tilde{x}_\tau}{d \mathcal{I}} = \frac{\det H_{\tau}}{\det J} \textrm{ and } \frac{d \tilde{x}_\rho}{d \mathcal{I}} = \frac{\det H_{\rho}}{\det J}
\end{equation}
where

\begin{equation} \label{primeira_parte_derivadas_xtauexro_I}
    \det H_{\tau} = \begin{vmatrix} 
    f_{\iota, x_{\iota}} & -1 & 0 & 0 \\
    0 & 0 & f_{\tau, x_{\rho}} & f_{\tau, x_{o}} \\
    0 & 0 & f_{\rho, x_{\rho}} & 0 \\
    f_{o, x_{\iota}} & 0 & f_{o, x_{\rho}}  & f_{o, x_{o}}
    \end{vmatrix}
    \textrm{ and } \det H_{\rho} = \begin{vmatrix} 
    f_{\iota, x_{\iota}} & 0 & -1 & 0 \\
    0 & f_{\tau, x_{\tau}} & 0 & f_{\tau, x_{o}} \\
    0 & f_{\rho, x_{\tau}} & 0 & 0 \\
    f_{o, x_{\iota}} & 0 & 0  & f_{o, x_{o}}
    \end{vmatrix}
\end{equation}

By \eqref{primeira_parte_derivadas_xtauexro_I}, we conclude that:

\begin{equation} \label{segunda_parte_derivadas_xtauexro_I}
    \det H_{\tau} = - f_{o, x_{\iota}} f_{\tau, x_{o}} f_{\rho, x_{\rho}}
    \textrm{ and } \det H_{\rho} = f_{o, x_{\iota}} f_{\tau, x_{o}} f_{\rho, x_{\tau}}
\end{equation}

Applying \eqref{cramer_segunda_derivada_segunda_parte} and \eqref{segunda_parte_derivadas_xtauexro_I} to \eqref{general_formula_derivative_appendage}, we obtain:

\begin{equation}
    \frac{d \det H_{1}}{d \mathcal{I}} = \frac{f_{o, x_{\iota}} f_{\tau, x_{o}}}{\det J} \left( f_{\rho, x_{\tau}} \frac{\partial \det H_{1}}{\partial \tilde{x}_{\rho}} - f_{\rho, x_{\rho}} \frac{\partial \det H_{1}}{\partial \tilde{x}_{\tau}}\right)
\end{equation}

We have already proved that $\det J \neq 0$ and $f_{o, x_{\iota}} \neq 0$. Moreover, as seen above, the feedback loop $ o \rightarrow \rho \rightarrow \tau \rightarrow o$ must be a negative feedback loop, which means that $f_{\tau, x_{o}} \neq 0$. Therefore, in order to the system present chair homeostasis, we must have:

\begin{equation} \label{necessary_condition_for_chair_homeostasis}
    f_{\rho, x_{\tau}} \frac{\partial \det H_{1}}{\partial \tilde{x}_{\rho}} - f_{\rho, x_{\rho}} \frac{\partial \det H_{1}}{\partial \tilde{x}_{\tau}} = 0
\end{equation}

Remember that, in the studied system we have:

\begin{equation} \label{explicit_formula_partial_derivatives_appendage}
    f_{\rho, x_{\tau}} = g k_3 - w_2 \frac{\tilde{x}_\tau^2 + 2 \tilde{x}_\tau - \tilde{x}_\rho}{(1+\tilde{x}_\tau)^2} \textrm{ and } f_{\rho, x_{\rho}} = - k_4 + w_2 \frac{\tilde{x}_\tau}{1+\tilde{x}_\tau}
\end{equation}

Applying \eqref{explicit_partial_derivatives_appendage_block} and \eqref{explicit_formula_partial_derivatives_appendage} to \eqref{necessary_condition_for_chair_homeostasis}, we obtain:

\begin{equation} \label{first_part_no_chair_in_copper}
\setlength{\jot}{15pt}
\begin{aligned}
& \left[ g k_3 - w_2 \frac{\tilde{x}_\tau^2 + 2 \tilde{x}_\tau - \tilde{x}_\rho}{(1+\tilde{x}_\tau)^2} \right] \cdot \frac{k_4 w_{2}}{(1+\tilde{x}_\tau)^2} + \left(k_4 - w_2 \frac{\tilde{x}_\tau}{1+\tilde{x}_\tau} \right) \cdot \left[ \frac{(g-1)k_{3}w_{2}}{(1+\tilde{x}_\tau)^2} - \frac{2k_{4}w_{2}(\tilde{x}_\rho + 1)}{(1+\tilde{x}_\tau)^3} \right] = 0 \\
& \Rightarrow \frac{(2g-1)k_{3}k_{4}w_{2}}{(1+\tilde{x}_\tau)^2} - \frac{k_{4}w_{2}^{2}(\tilde{x}_\tau^2 + 2 \tilde{x}_\tau - \tilde{x}_\rho)}{(1+\tilde{x}_\tau)^4} -\frac{2k_{4}^{2}w_{2}(\tilde{x}_\rho + 1)}{(1+\tilde{x}_\tau)^3} - \frac{(g-1)k_{3}w_{2}^{2}\tilde{x}_{\tau}}{(1+\tilde{x}_\tau)^3} + \frac{2k_{4}w_{2}^{2}(\tilde{x}_\rho + 1)\tilde{x}_{\tau}}{(1+\tilde{x}_\tau)^4} = 0 \\
& \Rightarrow \frac{(2g-1)k_{3}k_{4}w_{2}}{(1+\tilde{x}_\tau)^2} - \frac{w_{2}}{(1+\tilde{x}_\tau)^{2}}\left[ \frac{(g-1)k_{3}w_{2}\tilde{x}_{\tau}}{(1+\tilde{x}_\tau)} - \frac{k_{4}w_{2}(\tilde{x}_\tau^2 + 2 \tilde{x}_\tau - \tilde{x}_\rho)}{(1+\tilde{x}_\tau)^2} \right] - \frac{2k_{4}w_{2}^{2}(\tilde{x}_\tau^2 + 2 \tilde{x}_\tau - \tilde{x}_\rho)}{(1+\tilde{x}_\tau)^4} \\
& -\frac{2k_{4}^{2}w_{2}(\tilde{x}_\rho + 1)}{(1+\tilde{x}_\tau)^3} + \frac{2k_{4}w_{2}^{2}(\tilde{x}_\rho + 1)\tilde{x}_{\tau}}{(1+\tilde{x}_\tau)^4} = 0
\end{aligned}
\end{equation}

Now remind that the system present appendage homeostasis and by \eqref{det_H_copper_model}, we obtain:

\begin{equation} \label{explicit_condition_appendage_homeostasis}
\setlength{\jot}{15pt}
\begin{aligned}
    \det H_{1} = 0 \Leftrightarrow & \frac{(g-1)k_{3}w_{2}\tilde{x}_{\tau}}{(1+\tilde{x}_\tau)} - \frac{k_{4}w_{2}(\tilde{x}_\tau^2 + 2 \tilde{x}_\tau - \tilde{x}_\rho)}{(1+\tilde{x}_\tau)^2} + k_{3}k_{4} = 0 \\
   \Leftrightarrow & \frac{(g-1)k_{3}w_{2}\tilde{x}_{\tau}}{(1+\tilde{x}_\tau)} - \frac{k_{4}w_{2}(\tilde{x}_\tau^2 + 2 \tilde{x}_\tau - \tilde{x}_\rho)}{(1+\tilde{x}_\tau)^2} = - k_{3}k_{4}
\end{aligned}
\end{equation}

Applying \eqref{explicit_condition_appendage_homeostasis} to \eqref{first_part_no_chair_in_copper}, we obtain:

\begin{equation} \label{second_part_no_chair_in_copper}
\setlength{\jot}{15pt}
\begin{aligned}
    & \frac{2gk_{3}k_{4}w_{2}}{(1+\tilde{x}_\tau)^2} - \frac{2k_{4}w_{2}^{2}(\tilde{x}_\tau^2 + 2 \tilde{x}_\tau - \tilde{x}_\rho)}{(1+\tilde{x}_\tau)^4} -\frac{2k_{4}^{2}w_{2}(\tilde{x}_\rho + 1)}{(1+\tilde{x}_\tau)^3} + \frac{2k_{4}w_{2}^{2}(\tilde{x}_\rho + 1)\tilde{x}_{\tau}}{(1+\tilde{x}_\tau)^4} = 0 \\
    & \Rightarrow \frac{2k_{4}w_{2}}{(1+\tilde{x}_\tau)^2}\left[ gk_{3} - \frac{w_{2}(\tilde{x}_\tau^2 + 2 \tilde{x}_\tau - \tilde{x}_\rho)}{(1+\tilde{x}_\tau)^2} \right] + \frac{2k_{4}w_{2}(1 + \tilde{x}_{\rho})}{(1 + \tilde{x}_{\tau})^{3}} \left[ -k_{4} + \frac{w_{2}\tilde{x}_{\tau}}{(1+\tilde{x}_\tau)}\right] = 0
\end{aligned}
\end{equation}

Applying now \eqref{explicit_formula_partial_derivatives_appendage} to \eqref{second_part_no_chair_in_copper}:

\begin{equation} \label{third_part_no_chair_in_copper}
\begin{aligned}
    \frac{2k_{4}w_{2}}{(1+\tilde{x}_\tau)^2} f_{\rho, x_{\tau}} + \frac{2k_{4}w_{2}(1 + \tilde{x}_{\rho})}{(1 + \tilde{x}_{\tau})^{3}} f_{\rho, x_{\rho}} = 0 \Rightarrow \frac{2k_{4}w_{2}}{(1+\tilde{x}_\tau)^3} \left[ (1 + \tilde{x}_\tau)f_{\rho, x_{\tau}} + (1 + \tilde{x}_\rho) f_{\rho, x_{\rho}} \right] = 0
\end{aligned}
\end{equation}

Analysing equation \eqref{third_part_no_chair_in_copper}, it is easy to see that $\displaystyle \frac{2k_{4}w_{2}}{(1+\tilde{x}_\tau)^3} \neq 0$, and therefore in order to the system exhibit chair homeostasis, we must have:

\begin{equation} \label{fourth_part_no_chair_in_copper}
    (1 + \tilde{x}_\tau)f_{\rho, x_{\tau}} + (1 + \tilde{x}_\rho) f_{\rho, x_{\rho}} = 0
\end{equation}

As we are analysing the system in its point of appendage homeostasis, this means that the following equations must be simultaneously satisfied:

\begin{equation}
\begin{aligned}
    (1 + \tilde{x}_\tau)f_{\rho, x_{\tau}} + (1 + \tilde{x}_\rho) f_{\rho, x_{\rho}} & = 0 \\
    - f_{\tau, x_{\rho}}f_{\rho, x_{\tau}} + f_{\tau, x_{\tau}} f_{\rho, x_{\rho}} & = 0
\end{aligned}
\end{equation}

If we analyse these equations as an homogeneous linear system in variables $f_{\rho, x_{\tau}}$ and $f_{\rho, x_{\rho}}$ and remembering that $f_{\rho, x_{\tau}} \neq 0$ as $o \rightarrow \rho \rightarrow \tau \rightarrow o$ is a negative feedback loop, than we conclude that

\begin{equation} \label{fifth_part_no_chair_in_copper}
\begin{aligned}
    \begin{vmatrix} (1 + \tilde{x}_\tau) & (1 + \tilde{x}_\rho) \\ - f_{\tau, x_{\rho}} & f_{\tau, x_{\tau}} \end{vmatrix} = 0 \Rightarrow (1 + \tilde{x}_\tau)f_{\tau, x_{\tau}} + (1 + \tilde{x}_\rho) f_{\tau, x_{\rho}} = 0
\end{aligned}
\end{equation}

Remember that in the studied system we have:

\begin{equation} \label{explicit_derivatives_f_tau}
\begin{aligned}
f_{\tau, x_{\tau}} = - k_3 + w_2 \frac{\tilde{x}_\tau^2 + 2 \tilde{x}_\tau - \tilde{x}_\rho}{(1+\tilde{x}_\tau)^2} \textrm{ and } f_{\tau, \tilde{x}_{\rho}} = - w_2 \frac{\tilde{x}_\tau}{(1+\tilde{x}_\tau)}
\end{aligned}
\end{equation}

We may substitute \eqref{explicit_derivatives_f_tau} in \eqref{fifth_part_no_chair_in_copper}, obtaining:

\begin{equation} \label{sixth_part_no_chair_in_copper}
\begin{aligned}
    & (1 + \tilde{x}_\tau)f_{\tau, x_{\tau}} + (1 + \tilde{x}_\rho) f_{\tau, x_{\rho}} = 0 \\ &\Leftrightarrow \;
    - k_3 (1 + \tilde{x}_\tau) + w_2 \frac{\tilde{x}_\tau^2 + 2 \tilde{x}_\tau - \tilde{x}_\rho}{(1+\tilde{x}_\tau)} - w_2 \frac{\tilde{x}_\tau (1 + \tilde{x}_\rho)}{(1+\tilde{x}_\tau)} = 0 \\
    & \Leftrightarrow \;
    w_2 \frac{\tilde{x}_\tau^2 + 2 \tilde{x}_\tau - \tilde{x}_\rho}{(1+\tilde{x}_\tau)} - w_2 \frac{\tilde{x}_\tau (1 + \tilde{x}_\rho)}{(1+\tilde{x}_\tau)} = k_3 (1 + \tilde{x}_\tau)
\end{aligned}
\end{equation}

Applying now \eqref{explicit_formula_partial_derivatives_appendage} to \eqref{fourth_part_no_chair_in_copper}, we got:

\begin{equation} \label{seventh_part_no_chair_in_copper}
\begin{aligned}
    & (1 + \tilde{x}_\tau)f_{\rho, x_{\tau}} + (1 + \tilde{x}_\rho) f_{\rho, x_{\rho}} = 0 \\
    & \Leftrightarrow  \;
    g k_{3}(1 + \tilde{x}_\tau) - \left[w_2 \frac{\tilde{x}_\tau^2 + 2 \tilde{x}_\tau - \tilde{x}_\rho}{(1+\tilde{x}_\tau)} - w_2 \frac{\tilde{x}_\tau (1 + \tilde{x}_\rho)}{(1+\tilde{x}_\tau)} \right] - k_{4}(1 + \tilde{x}_\rho) = 0
\end{aligned}
\end{equation}

Finally, we can apply \eqref{sixth_part_no_chair_in_copper} to \eqref{seventh_part_no_chair_in_copper} in order to get:

\begin{equation} \label{final_part_no_chair_in_copper}
\begin{aligned}
    gk_{3}(1 + \tilde{x}_\tau) - k_{3}(1 + \tilde{x}_\tau) - k_{4}(1 + \tilde{x}_\rho) = 0 \Leftrightarrow (g-1)k_{3}(1 + \tilde{x}_\tau) - k_{4}(1 + \tilde{x}_\rho) = 0
\end{aligned}
\end{equation}

From the model, it is reasonable to consider $1 + \tilde{x}_\rho > 0$ and $1 + \tilde{x}_\tau > 0$ and therefore, as $0 < g \leq 1$, than \eqref{final_part_no_chair_in_copper} is a contradiction, which implies that a point of appendage homeostasis of the system is a point of simple homeostasis.


\ignore{

\section{Abstract Networks}

\subsection{Feedback and Feedforward Structures in Networks}

Feedback and feedforward are common mechanisms in regulated biological systems and they may contribute to homeostasis. Although these concepts are often employed, there is no formal definition of these mechanisms in networks. In control theory, the term feedback is employed when two (or more) dynamical systems are connected together such that each system influences the other \cite{astrom2010feedback}. It is worth to note that each node on our network $\mathcal{G}$ represents an admissible ODE, and therefore each node is a dynamical system. Inspired by this definition, we firstly define a cycle in the network:

\begin{definition}
A cycle in a core network is a path $\sigma_{1} \rightarrow \sigma_{2} \rightarrow \cdots \sigma_{n} \rightarrow \sigma_{1}$ of $\mathcal{G}$ such that $\sigma_{i} \neq \sigma_{j}$ if $i \neq j$.
\end{definition}

This allows us to identify feedback loops in core networks.

\begin{definition}
Every cycle in a core network is called a feedback loop.
\end{definition}

The definition above is combinatorial, in a sense that the structure of the network is enough to know if $\mathcal{G}$ has feedback loops, independent of the couplings strengths.A  poorer defined concept in is feedforward. In control theory, feedforward can be seen as a way of controlling the system when the a disturbance on a system generate simultaneously an error signal and a control signal that counteracts it \cite{astrom2010feedback}. On the other hand, some authors interpret feedforward systems as open loop systems \cite{astrom2010feedback}. In our case, we will employ the latter definition in order to highlight the combinatorial aspect of the definition.

\begin{definition}
A core network $\mathcal{G}$ is called a feedforward network if it does not contain any feedback loop.
\end{definition}

\begin{figure}[H]
\centering
\includegraphics[scale = 0.3]{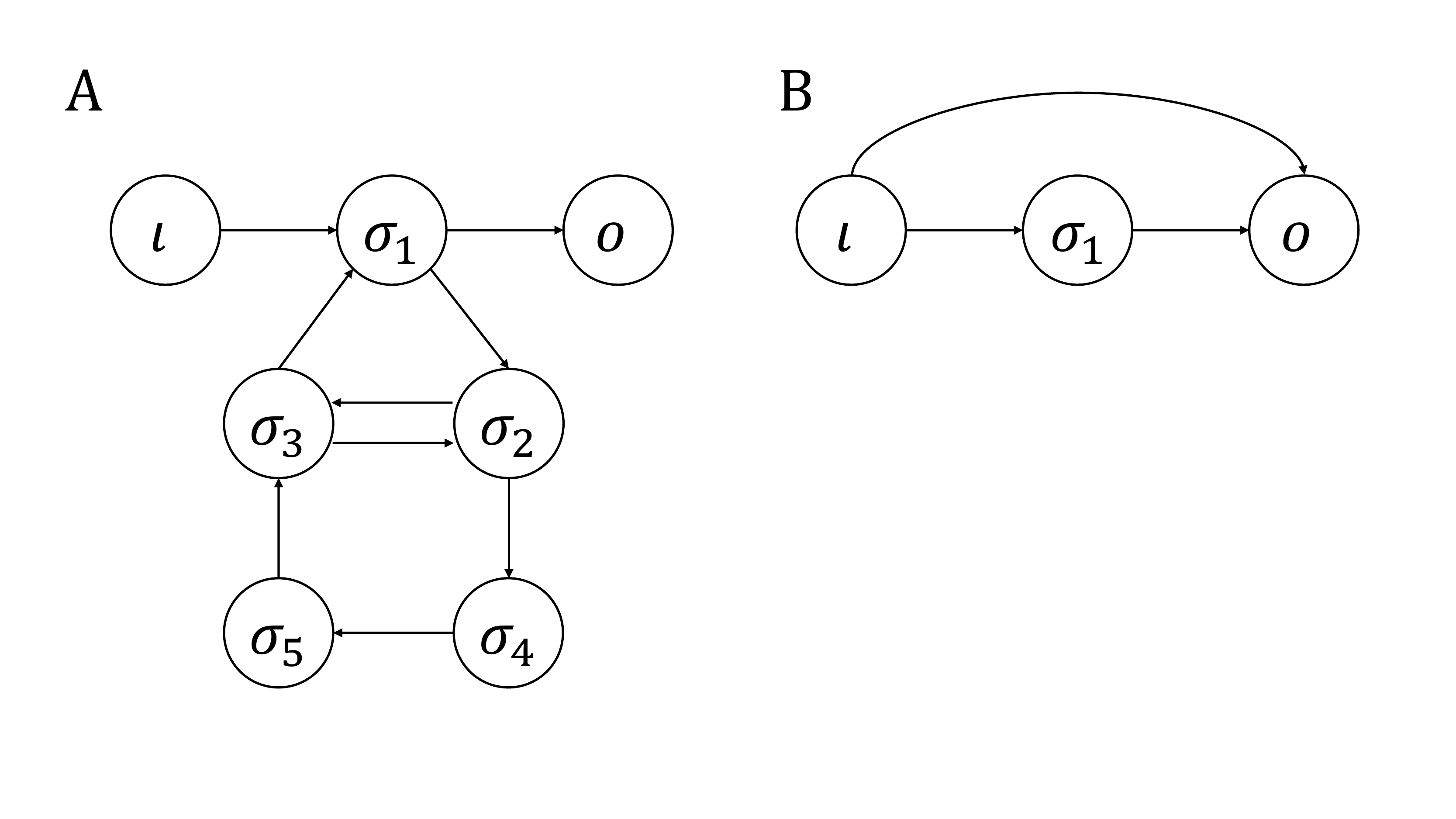}
\renewcommand{\figurename}{Figure}
\caption{The network in $A$ exhibits two feedback loops on $\sigma_{1}$: $\sigma_{1} \rightarrow \sigma_{2} \rightarrow \sigma_{3} \rightarrow \sigma_{1}$ and $\sigma_{1} \rightarrow \sigma_{2} \rightarrow \sigma_{4} \rightarrow \sigma_{5} \rightarrow \sigma_{3} \rightarrow \sigma_{1}$. On the other hand, the network in $B$ is a feedforward network, as it does not contain any feedback loop.}
\label{fig_feedback_feedforward}
\end{figure}

The following theorem relates the presence of feedback loops to appendage nodes.

\begin{theorem}
If the core network $\mathcal{G}$ has an appendage node, than $\mathcal{G}$ has a feedback loop.
\label{thm:feedback_appendage}
\end{theorem}

\begin{proof}
Suppose that $\mathcal{G}$ is a core network with the appendage node $\rho$. As $\mathcal{G}$ is a core network, there is a path between $\iota$ and $o$ which passes by $\rho$. By the definition of appendage node, this path is not a simple path, and therefore there is a node $\sigma_{0}$ of $\mathcal{G}$ by which this path passes at least twice. This means that there are nodes $\sigma_{1}, \sigma_{2}, \cdots, \sigma_{n}$ of $\mathcal{G}$ such that $\sigma_{0} \rightarrow \sigma_{1} \rightarrow \sigma_{2} \rightarrow \cdots \sigma_{n} \rightarrow \sigma_{0}$. If $\sigma_{i} \neq \sigma_{j}$ for $i \neq j$, than this is a feedback loop. If any of these nodes repeat in the pathway, let's call $\sigma_{m}$ the first node to repeat (which means the node with the least subscribed index) and consider the excerpt of the path described above between the first and second occurrence of $\sigma_{m}$, i.e., $\sigma_{m} \rightarrow \sigma_{m + 1} \rightarrow \sigma_{m + 2} \rightarrow \cdots \sigma_{m + k} \rightarrow \sigma_{m}$. This is a feedback loop.
\end{proof}

The theorem above shows that appendage nodes occur always in feedback loops, and in particular every network presenting appendage homeostasis has a feedback loop. The example below shows that the counterpart of this sentence is not true, i.e., the presence of feedback loop does not guarantee the presence of appendage nodes.

\begin{figure}[H]
\centering
\includegraphics[scale = 0.3]{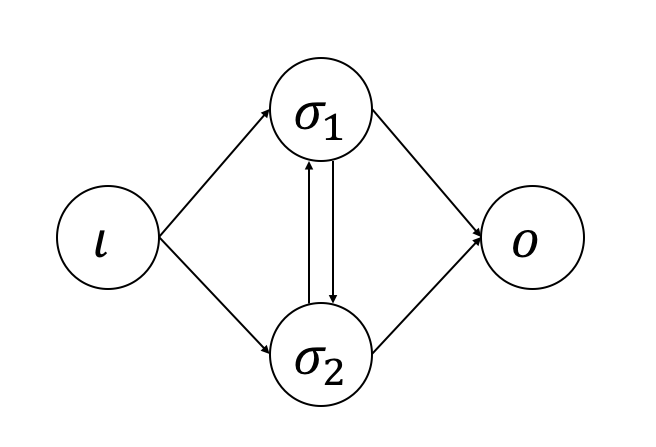}
\renewcommand{\figurename}{Figure}
\caption{Note that in this network there is no appendage node, although $\sigma_{1} \rightarrow \sigma_{2} \rightarrow \sigma_{1}$ is a feedback loop.}
\label{fig_feedback_no_appendage}
\end{figure}

The network in Figure \ref{fig_feedback_no_appendage} exhibits the feedback loop $\sigma_{1} \rightarrow \sigma_{2} \rightarrow \sigma_{1}$. However, this network does not have an appendage node, as every node is in a simple path between $\iota$ and $o$.
We may also give another characterization for feedforward networks based on the type of paths we find in a network.

\begin{theorem}
A core network $\mathcal{G}$ is a feedforward network if, and only if, all paths between the input and the output nodes are simple paths.
\end{theorem}

\begin{proof}
Let's initially suppose that in a core feedforward network $\mathcal{G}$ there is a path between the input and the output nodes which is not simple. Therefore, there is at least one node $\sigma$ by which the path passes at least twice. By a similar argument used in theorem \ref{thm:feedback_appendage}, this implies that $\mathcal{G}$ has a feedback loop, which is a contradiction. On the other hand, consider now that every path in a core network $\mathcal{G}$ is a simple path and suppose $\mathcal{G}$ has a feedback loop $\sigma_{1} \rightarrow \sigma_{2} \rightarrow \cdots \rightarrow \sigma_{n} \rightarrow \sigma_{1}$. As $\mathcal{G}$ is a core network, than $\sigma_{1}$ is in a simple path between the input node $\iota$ and the output node $o$, i.e., $\exists \rho_{1}, \rho_{2}, \cdots, \rho_{i}, \rho_{i + 1}, \cdots, \rho_{m}$ such that $\iota \rightarrow \rho_{1} \rightarrow \rho_{2} \rightarrow \cdots \rightarrow \rho_{i} \rightarrow \sigma_{1} \rightarrow \rho_{i + 1} \rightarrow \cdots \rightarrow \rho_{m} \rightarrow o$. Therefore, the path $\iota \rightarrow \rho_{1} \rightarrow \rho_{2} \rightarrow \cdots \rightarrow \rho_{i} \rightarrow \sigma_{1} \rightarrow \sigma_{2} \rightarrow \cdots \rightarrow \sigma_{n} \rightarrow \sigma_{1} \rightarrow \rho_{i + 1} \rightarrow \cdots \rightarrow \rho_{m} \rightarrow o$ is a not simple path between $\iota$ and $o$, which is a contradiction, concluding our proof.
\end{proof}

In control theory, there are basically two types of feedback loops: positive and negative feedback. The terms positive and negative refer to how the dynamical systems interact with each other. In particular, the term \textit{negative feedback} is usually employed when the dynamical system reacts to disturbances in a way that decreases the effect of those disturbances, whilst 
\textit{positive feedback} usually applies when the increase in some signal leads to a situation in which that quantity
is further increased through its dynamics \cite{astrom2010feedback}. Although both concepts are paramount to comprehend the dynamics of networks of biological systems, there is no precise definition of these terms in literature. Taking the concepts of excitatory and inhibitory paths and of feedback loops described above, there is a natural way to define positive and negative feedback.

\begin{definition}
If the product of coupling strengths along the feedback loop $\sigma_{1} \rightarrow \sigma_{2} \rightarrow \cdots \sigma_{n} \rightarrow \sigma_{1}$ is an excitatory path (according to definition 1.7) in a core network $\mathcal{G}$, it will be called a positive feedback loop. Similarly, if this feedback loop is an inhibitory path, then it will be called negative feedback loop.
\end{definition}

\section{Appendage Homeostasis and Negative Feedback Loops}

Consider the network shown in Figure \ref{four_nodes_example_feedback} of the intracellular copper regulation and consider it exhibits appendage homeostasis, i.e., $f_{\tau, x_{\rho}}\cdot f_{\rho, x_{\tau}} = f_{\tau, x_{\tau}}\cdot f_{\rho, x_{\rho}}$ for some $\mathcal{I}$. One feature of the system is crucial to understand about is whether the feedback loop $o \rightarrow \tau \rightarrow \rho \rightarrow o$ is positive or negative. This matter is analogous to understand if

\begin{equation} \label{produto_alca_feedback}
    f_{\tau, x_{o}}f_{\rho, x_{\tau}}f_{o, x_{\rho}} = 2\frac{fk_{1}x_{\rho}}{(1+x_{\rho}^{2})^{2}}\left(w_2 \frac{x_\tau^2 + 2 x_\tau - x_\rho}{(1+x_\tau)^2} - gk_{3} \right)
\end{equation}
is either positive or negative. This is not a trivial question to solve looking to equation \eqref{produto_alca_feedback}. We shall use the fact that the system presents appendage homeostasis to classify this feedback loop.

Besides homeostasis, to treat this question, we must remember that the equilibrium of the system must be linearly stable. In order to analyse this feature, let's take the jacobian matrix of the system in the equilibrium:

\begin{equation}
\begin{aligned}
    J & = \begin{pmatrix}
    f_{\iota, x_{\iota}} & 0 & 0 & 0 \\
    0 & f_{\tau, x_{\tau}} & f_{\tau, x_{\rho}} & f_{\tau, x_{o}} \\
    0 & f_{\rho, x_{\tau}} & f_{\rho, x_{\rho}} & 0 \\
    f_{o, x_{\iota}} & 0 & f_{o, x_{\rho}}  & f_{o, x_{o}}
    \end{pmatrix}
\end{aligned}
\end{equation}

It is clear that the eigenvalues of $J$ are $f_{\iota, x_{\iota}} = -k_{0} < 0$ and the eigenvalues of the matrix
\begin{equation}
\overline{J} = \left( \begin{array}{ccc} 
f_{\tau, x_{\tau}} & f_{\tau, x_{\rho}} & f_{\tau, x_{o}} \\
f_{\rho, x_{\tau}} & f_{\rho, x_{\rho}} & 0 \\
0 & f_{o, x_{\rho}} & f_{o, x_{o}}
\end{array} \right)
\end{equation}

The characteristic polynomial of $\overline{J}$ is
\begin{equation}
p_{\overline{J}}(\lambda) = \left\vert \begin{array}{ccc} 
f_{\tau, x_{\tau}} - \lambda & f_{\tau, x_{\rho}} & f_{\tau, x_{o}} \\
f_{\rho, x_{\tau}} & f_{\rho, x_{\rho}} - \lambda & 0 \\
0 & f_{o, x_{\rho}} & f_{o, x_{o}} - \lambda
\end{array} \right\vert
\end{equation}

which means that
\begin{equation}
\begin{split}
p_{\overline{J}}(\lambda) = - \lambda^{3} + \lambda^{2}(f_{\tau, x_{\tau}} + f_{\rho, x_{\rho}} + f_{o, x_{o}}) - \lambda(f_{\tau, x_{\tau}}f_{\rho, x_{\rho}} - f_{\tau, x_{\rho}}f_{\rho, x_{\tau}} + f_{\tau, x_{\tau}}f_{o, x_{o}} + f_{\rho, x_{\rho}}f_{o, x_{o}}) \\ + (f_{\tau, x_{\tau}}f_{\rho, x_{\rho}} - f_{\tau, x_{\rho}}f_{\rho, x_{\tau}})f_{o, x_{o}} + f_{\tau, x_{o}}f_{\rho, x_{\tau}}f_{o, x_{\rho}}
\end{split}
\end{equation}
As the system presents appendage homeostasis, than $f_{\tau, x_{\tau}}f_{\rho, x_{\rho}} = f_{\tau, x_{\rho}}f_{\rho, x_{\tau}}$. Using this fact, the characteristic polynomial may be simplified to
\begin{equation}
\begin{split}
p_{\overline{J}}(\lambda) = - \lambda^{3} + \lambda^{2}(f_{\tau, x_{\tau}} + f_{\rho, x_{\rho}} + f_{o, x_{o}}) - \lambda(f_{\tau, x_{\tau}}f_{o, x_{o}} + f_{\rho, x_{\rho}}f_{o, x_{o}}) + f_{\tau, x_{o}}f_{\rho, x_{\tau}}f_{o, x_{\rho}}
\end{split}
\label{eq:characteristic_polynomial}
\end{equation}

By \eqref{eq:characteristic_polynomial}, the product of the three eigenvalues of $\overline{J}$ is $f_{\tau, x_{o}}f_{\rho, x_{\tau}}f_{o, x_{\rho}}$. We are analysing the system in a point of appendage homeostasis and therefore all eigenvalues of $\overline{J}$ have negative real part. For that reason, if all roots of the characteristic polynomial are real, than all should be negative which implies that $f_{\tau, x_{o}}f_{\rho, x_{\tau}}f_{o, x_{\rho}} < 0$. On the other hand, if one of the eigenvalues is not real, its complex conjugate is also an eigenvalue, as all the coefficients of $p_{\overline{J}}$ are real. The product of a non-null complex number by its conjugate is positive, and therefore the product of the three eigenvalues in that case must also be negative, i.e., in every case $f_{\tau, x_{o}}f_{\rho, x_{\tau}}f_{o, x_{\rho}} < 0$ to keep the equilibrium stable, which means that the feedback loop $ o \rightarrow \rho \rightarrow \tau \rightarrow o$ must be a negative feedback loop.

\section{Minimal Networks}

An interesting question that arises from studying the structure of networks is what are the minimal number of nodes in a network supporting different types of homeostasis.

\begin{proposition}
If a network $\mathcal{G}$ supports Haldane, appendage and null-degradation homeostasis, than $\mathcal{G}$ has at least 5 different nodes.
\end{proposition}
\begin{proof}
In any core network, we have at least two simple nodes: $\iota$ (input) and $o$ (output) nodes. In order to $\mathcal{G}$ support no degradation homeostasis, we must have at least one node from which the self coupling dynamics originates no degradation homeostasis. Furthermore, in order to support appendage homeostasis, $\mathcal{G}$ must have ate least two other appendage nodes with a closed loop between them. Therefore, to support Haldane, appendage and no degradation homeostasis, the number of nodes of $\mathcal{G}$ is greater or equal to 5. On the other hand, consider the network in Figure~\ref{network_five_nodes}.
This network presents 5 nodes and supports Haldane, appendage and no degradation homeostasis, as
\begin{equation}
\det H = - f_{o, x_{\iota}}f_{\sigma, x_{\sigma}}(f_{\rho, x_{\rho}}f_{\tau, x_{\tau}} - f_{\rho, x_{\tau}}f_{\tau, x_{\rho}})
\end{equation}
concluding the proof.
\end{proof}

\begin{figure}[H]
\centering
\includegraphics[scale = 0.35]{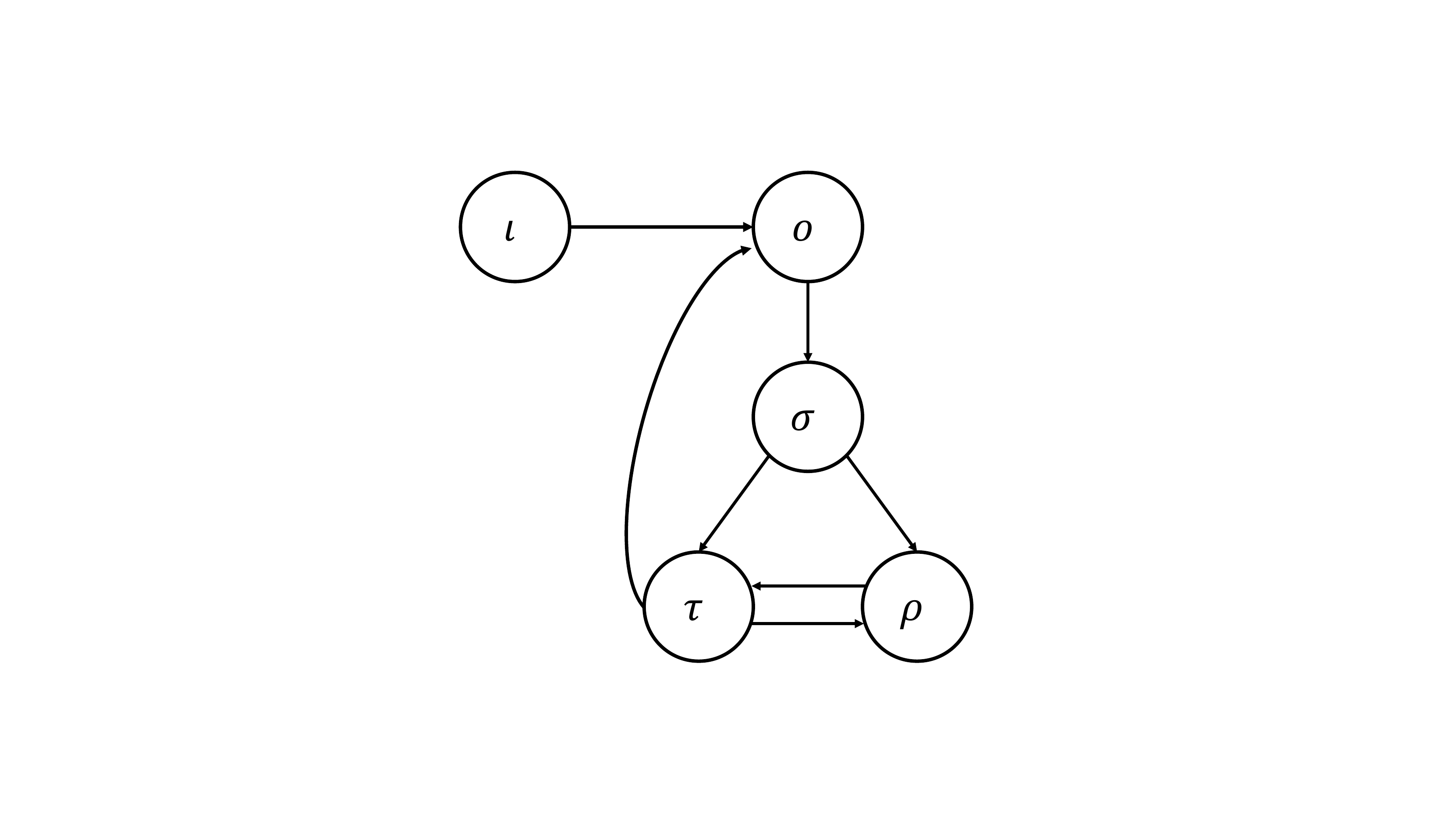}
\renewcommand{\figurename}{Figure}
\caption{The drawn network supports Haldane, no degradation and appendage homeostasis.}
\label{network_five_nodes}
\end{figure}

Notice that there are other configurations of a 5 node network that support these types of homeostasis simultaneously.

\subsection{Minimal Core Networks supporting Asymptotic Infinitesimal Homeostasis}

A natural question is what is the lowest number of nodes necessary to a core network support asymptotic infinitesimal homeostasis. We shall study an abstract example of a core network with 2 nodes which exhibit asymptotic infinitesimal homeostasis.

Consider the dynamical system in $\mathbb{R}^{2}$ with input parameter $\mathcal{I}$ with the following dynamics:

\begin{equation} \label{example_minimal_core_network_asymptotic_inf-homeostasis}
\begin{aligned}
& \dot{x}_{\iota} = - a x_{\iota} + \mathcal{I} \\
& \dot{x}_{o} = - a x_{o} - b x_{\iota} x_{o} + c
\end{aligned}
\end{equation}
where $a$, $b$ and $c$ are positive parameters.

A simple computation shows that the equilibrium points $\left(\tilde{x}_{\iota}, \tilde{x}_{o}\right)$ of the system \eqref{example_minimal_core_network_asymptotic_inf-homeostasis} satisfy the condition:

\begin{equation} \label{example_minimal_core_network_asymptotic_inf-homeostasis_equilibrium_points}
\begin{aligned}
& \tilde{x}_{\iota} = \frac{\mathcal{I}}{a} \\
& \tilde{x}_{o} = \frac{ac}{a^{2}+b\mathcal{I}}
\end{aligned}
\end{equation}

\section{Perturbation of Input-Output Function}

Control-theoretic models of homeostasis often build in an explicit ``target'' value for the output and construct the equations to ensure that the input-output function is exactly flat over some interval.  Such models are common and provide useful information for many purposes.  In singularity theory, an exactly flat input-output function has ``infinite codimension'', so our approach is not appropriate for models of this type.  However, in biology, homeostasis is an emergent property of biochemical networks, not a preset target value, and the input-output function is only approximately flat.  Many of the more recent models of homeostasis do not assume a preset target value; instead, this emerges from the dynamics of a biochemical network.  Here we expect typical singularities to have ``finite codimension'', and our approach is then potentially useful.  A key question is, in a mathematical sense, how does a biological system evolve towards homeostasis? Imagine a system of differential equations depending on parameters. Suppose that initially the parameters are set so that the associated input-output function has no regions of homeostasis. Now vary the parameters so that a small region of homeostasis appears in the input-output function. Since this region of homeostasis is small, we can assume that it is spawned by a singularity associated with infinitesimal homeostasis.

{\em Perfect homeostasis} and {\em robust homeostasis} have been used in the literature inspired by the terminology {\em perfect adaptation} and {\em robust adaptation} coming from control theory when referring to homeostasis in sensory systems.  {\em Perfect homeostasis} means that the output is identically constant on an interval of input values and {\em robust homeostasis} means that the system maintains homeostatic behavior even when its parameter values are slightly altered.

Under modern control theory, a set of sophisticated methods generally called ``robust control'' has  been developed. Robust control assumes uncertainties in a model and defines a method of applying stable control over the system such that proper control is guaranteed even if the model deviates from the real system due to modeling errors. Note that robust control assumes a control system that stabilizes the target system so as to be robust against model errors. Nevertheless, control theory assumes a system that is designed to meet given criteria, and so it cannot be directly applied to biological systems that have evolved and for which the desirable state of the system is not explicit. In addition, most of the mathematics used to describe robustness are mostly based on control theory, which tend to focus on stability and performance of monostable systems.

At least two assumptions underlie this discussion. 
First, we show that all perturbations of the input-output function can be realized by perturbations in the system of ODEs.

\begin{proposition} \label{thm:perturbation}
Given a system of ODEs $\dot{x} = F(x,\lambda)$ whose zero set is defined by
$F(X(\lambda),\lambda) \equiv 0$
and a perturbation $\tilde{X}(\lambda) = X(\lambda) + P(\lambda)$ of that zero set, $\tilde{X}$ is the zero set of the perturbation
\[
\tilde{F}(x,\lambda) = F(x-P(\lambda),\lambda).
\]
Therefore, any perturbation of the input-output function $Z(\lambda)$ can be realized by perturbation of $F$.
Moreover, if $F$ is an admissible vector filed of a network we can always choose the perturbation to be admissible.
\end{proposition}
\begin{proof}
Clearly,
\[
\begin{split}
\tilde{F}(\tilde{X}(\lambda),\lambda) 
& = \tilde{F}(X(\lambda) + P(\lambda),\lambda)\\
& = F(X(\lambda) + P(\lambda) - P(\lambda),\lambda) \\
& = F(X(\lambda),\lambda) \\
& = 0.
\end{split}
\]
If we write $P(\lambda) = (0,P_z(\lambda))$, where $P_z(\lambda)$ is a small perturbation of $Z(\lambda)$, then we can obtain
the perturbation $Z+P_z$ of $Z$ by the associated perturbation of $F$.
Moreover, the above perturbation is always admissible, since it depends only on the input parameter.
\end{proof}

\begin{remark} \rm
Note however that the perturbation procedure of proposition \ref{thm:perturbation} does not preserve an input-output system, that is, an admissible vector field with designated input and output nodes (variables). In fact, the perturbation of proposition \ref{thm:perturbation} always adds the input parameter to the output node. 
\end{remark}

}


\bibliographystyle{abbrv}
\bibliography{refs}

\end{document}